\newtheorem{definition}{Definition}
\newtheorem{thm}{Theorem}
\newtheorem{lem}{Lemma}
\newtheorem{prop}{Proposition}
\newtheorem*{remark}{Remark}
\title{Large Time Behaviour and Convergence Rate for Non Demolition Quantum Trajectories}
\author[$\spadesuit$]{Tristan Benoist\thanks{tristan.benoist@ens.fr}}
\author[$\clubsuit$]{Cl\'ement Pellegrini\thanks{clement.pellegrini@math.univ-toulouse.fr}}
\affil[$\spadesuit$]{\small Laboratoire de Physique Th\'eorique de l'ENS,}
 \affil[$$]{CNRS \& \'Ecole Normale Sup\'erieure de Paris.}
\affil[$$]{24 rue Lhomond 75005 Paris, France}
\affil[$$]{}
\affil[$\clubsuit$]{\small Institut de Math\'ematiques de Toulouse}
\affil[$$]{Equipe de Statistique et de Probabilit\'e}
\affil[$$]{Universit\'e Paul Sabatier}
\affil[$$]{31062 Toulouse Cedex 9, France}
\begin{document}

 \maketitle

 \begin{abstract}
 A quantum system $\mathcal S$ undergoing continuous time measurement is usually described by a jump-diffusion stochastic differential equation. Such an equation is called a stochastic master equation and its solution is called a quantum trajectory. This solution describes actually the evolution of the state of $\mathcal S$. In the context of Quantum Non Demolition measurement, we investigate the large time behavior of this solution. It is rigorously shown that, for large time, this solution behaves as if a direct Von Neumann measurement  has been performed at time 0. In particular the solution converges to a random pure state which is related to the wave packet reduction postulate. Using theory of Girsanov transformation, we determine precisely the exponential rate of convergence towards this random state. The important problem of state estimation (used in experiment) is also investigated.
 \end{abstract}

 \section*{Introduction}
 In quantum optics, indirect measurements are often used \cite{Har1,haroche}. Usually a system is probed by light beams (direct photodetection, homodyne and heterodyne detection schemes) or conversely, atoms probe a photon field trapped in a cavity. Such experiments are promising towards the manipulation of quantum states \cite{haroche,sayrin,W1,Rou4,Rou5,Mi1}. They are designed to extract information from a quantum system on site and without destroying it. The idea is to avoid direct interaction of the quantum system with a macroscopic apparatus (photo detector, screen ...). Instead  the physical setup is the following: a quantum system $\mathcal S$ (from which we want to extract information) is put in interaction with an auxiliary quantum system $\mathcal E$. After interaction, a measurement on $\mathcal E$ is performed. Due to the entanglement between $\mathcal S$ and $\mathcal E$, the result of the measurement 
holds some information on $\mathcal S$. Conditionally to this result one can compute the 
evolution of $\mathcal S$. 
 
 One of the best example of such setups is Serge Haroche's group experiment at LKB\cite{guerlin}. They have successfully implemented a scheme of repeated interactions and measurements which allows to measure the number of photons in a cavity (without destroying the photons). The principle consist in putting the system $\mathcal S$ (the cavity photon field) in contact with a sequence of auxiliary systems (Rydberg atoms) $\mathcal E_k$ which interact one after the other with $\mathcal S$. After each interaction a measurement on the atom $\mathcal E_k$ which has just finish to interact is performed. Such a procedure, called \textit{repeated quantum indirect measurements}, allows to monitor the system $\mathcal S$ and to have an estimation of the number of photons inside the cavity.
 
 A particular feature in the Serge Haroche's group experiment is that only Quantum Non Demolition (QND) measurement are performed. Such a scheme is at the cornerstone of the mathematical study of the long time behavior of $\mathcal S$. In \cite{Bauer1,Bauer2}, the authors show that the state of the system $\mathcal S$ converges when the number of interactions tends to infinity. More precisely they show a convergence which is compatible with the wave function collapse postulate. In particular it is shown that the state of $\mathcal S$ behaves in infinite time as if a direct Von Neumann measurement on $\mathcal S$ would have been performed at time $0$. Essentially these results concern discrete time model where the time of interaction $\tau$ between $\mathcal S$ and a piece $\mathcal E_k$ is fixed. They apply to general nondemolition measurement scheme of which Serge Haroche's group experiment is an example (see \cite{rev_devoret} and references therein).
 
 When the time of interaction goes to zero, this yields to continuous time models. In \cite{FLA} it has been shown that quantum repeated interactions model are a powerful approximation of the so called \textit{Quantum Langevin equation}. In \cite{Pe1,Pe2,Pe3}, it is shown that the continuous time approximation ($\tau$ goes to zero) of repeated quantum indirect measurements lead to jump-diffusion stochastic differential equations (see also \cite{Bauer3}). Such equations are namely the equations which describe the evolution of a quantum system undergoing indirect continuous measurements \cite{Bar1,Bar2,Bar3,Bar4,Bar5,Bar6,Gisin1,francesco2,carm0,W1}. They are called \textit{stochastic master equations} and their solutions \textit{quantum trajectories}. 
 
In this article, we focus on the stochastic master equations describing general continuous time quantum nondemolition measurement. Our main purpose is to describe the long time behavior of the state of $\mathcal S$ when the time $t$ goes to infinity. In particular if $(\rho(t))$\footnote{The process $(\rho(t))$ is actually the quantum trajectory describing the evolution of the state of $\mathcal S$ which undergoes indirect continuous measurement} describes the stochastic evolution of $\mathcal S$ undergoing indirect QND measurement, we show that  $(\rho(t))$ converges to a pure state $\vert \Upsilon\rangle\langle\Upsilon\vert$. This convergence is obtained by studying in detail the quantities $(q_\alpha(t))$ defined by

 $$q_\alpha(t)=\textrm{Tr}[\rho(t)\vert \alpha\rangle\langle\alpha\vert], \alpha\in\mathcal P,  t\geq0$$
 where $\mathcal P$ is a preferred basis of the Hilbert space describing $\mathcal S$. In particular the quantity $q_\alpha(t)$ gives the probability for $\mathcal S$ to be in the state $\vert\alpha\rangle$ if a direct measurement on $\mathcal S$ would have ben performed at time $t$. The elements $\vert\alpha\rangle$ are often called pointer states. We show that $\Upsilon$ is a random variable on the set of pointer states. By studying the distribution of $\Upsilon$ we are able to connect the convergence towards $\vert \Upsilon\rangle\langle\Upsilon\vert$ with the wave function collapse postulate at time $0$. From this convergence we study the behavior of the system conditioned on the limit state $\Upsilon$ being $\gamma$. This conditioning corresponds to a particular martingale change of measure. Using a standard Girsanov transformation, we show rigorously that the convergence towards $\vert \Upsilon\rangle\langle\Upsilon\vert$ is exponentially fast and we give its explicit rate. The problem of 
estimation is also investigated when we are in the situation where the initial state $\rho(0)$ is unknown. In this context, we only have access to the result of the measurements. Since $\rho(0)$ is assumed to be unknown, this is 
not sufficient for describing totally $(\rho(t))$. In particular we are only able to describe the evolution of an estimate state of $(\rho(t))$ denoted by $(\tilde \rho(t))$. We show that this estimation is efficient since $(\tilde \rho(t))$ converges in long time to the same limit $\vert \Upsilon\rangle\langle\Upsilon\vert$. Such a property often refers to the notion of stability of quantum filter \cite{Van1,Rou1,Rou2,Rou3,Mi1}.
 \bigskip
 
 The article is structured as follows. In Section 1, we introduce the stochastic models describing the generic stochastic master equations. Next we present the particular case of nondemolition stochastic master equations. This allows us to define the processes $(q_\alpha(t))$. We then study the properties of these processes and show that they are bounded martingales. Section 2 is devoted to the main convergence theorem. From the martingale and boundedness property of $(q_\alpha(t))$, we conclude that these processes converge almost surely. This allows us to present the main convergence result and to define precisely the random variable $\Upsilon$. Next, using appropriate Girsanov change of measure, we show that this convergence is exponentially fast. Finally we investigate the problem of estimation.

 \section{Non destructive quantum trajectories}
  \subsection{System state evolution}
  
  This section is devoted to present the continuous time stochastic processes which describe quantum trajectories. As announced these stochastic processes are solutions of particular type of jump-diffusion stochastic differential equations. 
  
 Before presenting the SDEs, let us introduce some notations. The quantum system is represented by a finite dimensional Hilbert space denoted by $\mathcal{H}$. We denote the set of density matrices by $\mathcal S (\mathcal H)=\{\rho\in\mathcal B(\mathcal H),\rho\geq 0,\textrm{Tr}[\rho]=1\}$. A density matrix represents a general system mixed state. A system in a pure state $\vert \phi \rangle \in \mathcal H$ corresponds to a special case where the density matrix is the projector onto $\vert \phi\rangle$. In this situation the corresponding density matrix is $\rho_{\vert \phi \rangle}=\vert \phi\rangle \langle\phi \vert$. In the rest of the article, if not specified, the term state refers to a density matrix.
 
 Let us consider a family $C_i,i=0,\ldots,n$ of operators in $\mathcal B (\mathcal H)$ and let $H\in\mathcal B(\mathcal H)$ such that $H=H^*$ i.e. $H$ is a Hermitian operator. On $\mathcal S(\mathcal H)$, we introduce the following functions:
   \begin{equation}\label{eq:def_functions}
   \begin{split}
    L(\rho)&=-i[H,\rho] + \sum_{i=0}^n\left( C_i\rho C_i^* - \frac12 \big(C_i^*C_i \rho + \rho C_i^*C_i\big)\right)\\
   J_i(\rho)&=C_i\rho C_i^*,\,\, i=0,\ldots,n\\
   v_i(\rho)&=\textrm{Tr}[J_i(\rho)],\,\, i=0,\ldots,n\\
  H_i(\rho)&=C_i\rho + \rho C_i^* - {\rm Tr}[(C_i + C_i^*)\rho]\rho,\,\, i=0,\ldots,n,
   \end{split}
  \end{equation}
 for all states $\rho\in\mathcal S(\mathcal H)$.
 
 Let $(\Omega,\mathcal F,(\mathcal F_t),\mathbb P)$ be a filtered probability space with usual conditions. Let $(W_j(t)),j=0,\ldots,p$ be standard independent Wiener processes and let $(N_j(dx, dt)),j=p+1,\ldots,n$ be independent adapted Poisson point processes of intensity $dxdt$; the $N_j$'s are independent of the Wiener processes. We assume that $(\mathcal F_t)$ is the natural filtration of the processes $W,N$ and we assume also that $\displaystyle \mathcal F_\infty=\bigvee_{t>0}\mathcal F_t=\mathcal F$.
 
  On $(\Omega,\mathcal F,(\mathcal F_t),\mathbb P)$, we consider the following SDE
 \begin{equation}\label{eq:def_trajectory}
    \begin{split}
    \rho(t)=\rho_0 &+ \int_0^t L(\rho(s-))ds \\
		   &+ \sum_{i=0}^p\int_0^t H_i(\rho(s-)) dW_i(s) \\
		   &+ \sum_{i=p+1}^n \int_0^t \int_{\mathbb R} \left(\frac{J_i(\rho({s-}))}{v_i(\rho({s-}))}-\rho({s-})\right)\mathbf{1}_{0<x<v_i(\rho(s-))} [N_i(dx,ds)-dxds],
    \end{split}
   \end{equation}
 where $\rho_0\in\mathcal S(\mathcal H)$.
   
     \begin{definition}\label{def:trajectory}
   The equation \eqref{eq:def_trajectory} is called a stochastic master equation and its solution is called a quantum trajectory.
 \end{definition}
 
Equation \eqref{eq:def_trajectory} is a "generic"\footnote{One can generalize these equations by introducing time dependent and random coefficients \cite{Bar6}} SDE describing the evolution of a system undergoing continuous indirect measurements. Results of existence and uniqueness of the solution of \eqref{eq:def_trajectory} can be found in \cite{Pe1,Pe2,Pe3,Bar2,Bar6}.
 
 In Eq. \eqref{eq:def_trajectory}, the operator $L$ is a usual Lindblad operator \cite{Lin1,Gor1}. These operators appear in the definition of the master equation in the Markovian approach of Open Quantum Systems. 
 
From Eq. \eqref{eq:def_trajectory}, one can introduce the measurement record counting processes: $$\hat{N}_i(t)=\int_0^t \int_{\mathbb R}\mathbf{1}_{0<x<v_i(\rho(s-))} N_i(dx,ds), i=p+1,\ldots,n.$$
These processes are counting processes with stochastic intensity given by
 $$\int_0^t v_i(\rho(s-))ds,i=p+1,\ldots,n.$$
 In particular, for any $i\in\{p+1,\ldots, n\}$, the process $(\hat{N}_i(t)-\int_0^t v_i(\rho(s-))ds)$ is a $(\mathcal{F}_t)$ martingale under the probability $\mathbb P$. 

During an experiment, these processes would correspond to the counting measurement records an experimenter would obtain. For example $\hat{N}_i(t)$ could correspond to the total number of photons arrived on a detector up to time $t$.  In section \ref{sec:trial_state} we discuss in more details what would be the equivalent for a continuous measurement record. 

In terms of $\hat{N}_i(t)$, Eq. \eqref{eq:def_trajectory} can be written as
 \begin{eqnarray}\label{eq:def_trajectory2}
   d\rho(t)&=&L(\rho(t-))dt + \sum_{i=0}^p H_i(\rho(t-)) dW_i(t)\nonumber\\&& + \sum_{i=p+1}^n \left(\frac{J_i(\rho({t-}))}{v_i(\rho({t-}))}-\rho({t-})\right)(d\hat{N}_i(t)-v_i(\rho(t-))dt).
\end{eqnarray}
  
In the next section we introduce a nondemolition condition on this evolution and study the large time behaviour of $(\rho(t))$.
  
  \subsection{Non demolition condition}
  
A measurement process is called nondemolition if one can find a basis $\mathcal P$ of $\mathcal H$ such that any element of $\mathcal P$ is unmodified by the measurement process. If, at a given time, a system is in one of the basis states, it will remain in it at any future time with probability one.

\begin{definition}
Let $\mathcal P$ be a basis of $\mathcal H$. A measurement process fulfills a nondemolition condition for $\mathcal P$ if any state of $\mathcal P$ is stable under the measurement process: for any $\vert \alpha \rangle \in \mathcal P$ if at time $s$, $\rho(s)=\vert \alpha \rangle\langle\alpha\vert$ then for any time $t>s$, $\rho(t)=\rho(s)$, almost surely.
\end{definition}
The stable states $\vert \alpha\rangle\langle\alpha\vert, \alpha\in\mathcal P$, are called \textit{pointer states}.

We assume from now on that $H$ and the $C_i$'s are diagonal in the basis $\mathcal P$. The main result we prove in this section is the equivalence between this diagonal assumption and a nondemolition condition for $\mathcal P$.

 The diagonal assumption can be expressed as follows. There exist $\epsilon(\alpha) \in \mathbb R$, $c(i\vert\alpha)\in \mathbb C$ such that
  \begin{align*}
   H&=\sum_{ \alpha \in \mathcal P} \epsilon(\alpha) \vert \alpha\rangle\langle\alpha\vert\\
   C_i&=\sum_{ \alpha \in \mathcal P} c(i\vert \alpha) \vert \alpha\rangle\langle\alpha\vert \;,i=0,\ldots,n.
  \end{align*}
  Attached to these decompositions we introduce the following quantities which will be used further
   \begin{align*}
   r(i\vert\alpha)&=c(i|\alpha)+\overline{c(i|\alpha)},\,\, i=0,\ldots,n,\\
   \theta(i\vert\alpha)&=|c(i|\alpha)|^2,\,\, i=0,\ldots,n.
   \end{align*}
   Here $\overline{z}$ is $z$ complex conjugate. In the basis $\mathcal P$, we denote a matrix $A=(A_{\alpha\beta})_{\alpha,\beta}$.

Our study of $(\rho(t))$ is mainly based on the study of the diagonal elements of $(\rho(t))$ in the basis $\mathcal P$. If a direct measurement identifying all the pointer states would have been performed at time $t=0$, then the system after this measurement would have been in the pointer state $\vert \alpha\rangle\langle \alpha\vert$ with probability $\rho_{\alpha\alpha}(0)={\rm Tr}[\rho(0)\vert \alpha\rangle\langle \alpha\vert]$. If the same direct measurement is performed at time $t>0$ the probability to obtain the same system pointer state is $\rho_{\alpha\alpha}(t)={\rm Tr}[\rho(t)\vert \alpha\rangle\langle \alpha\vert]$.  So, the evolution of the diagonal elements of $(\rho(t))$ in $\mathcal P$ gives us information on the distribution of such direct measurement outcomes.

In the sequel, for all $\alpha\in\mathcal P$, we use the notations
\begin{equation}
q_\alpha(t)=\rho_{\alpha\alpha}(t)={\rm Tr}[\rho(t)\vert \alpha\rangle\langle \alpha\vert],\,\,t\geq0.
\end{equation}
As a preliminary, let us prove that the $(q_\alpha(t))$, $\alpha \in \mathcal P$ are $(\mathcal F_t)$ martingales solutions of Dade-Doleans type of SDEs.
 
 \begin{thm} On $(\Omega,\mathcal F,(\mathcal F_t),\mathbb P)$, the stochastic processes $(q_\alpha(t)),\alpha\in\mathcal P$ satisfy the following system of stochastic differential equations
 \begin{multline}\label{eq:q_alpha(t)}
     dq_\alpha(t)= q_\alpha(t-)\Bigg[\sum_{i=0}^p\big(r(i\vert \alpha) - \langle r_i(t-)\rangle\big) dW_i(t)
			    \hfill \\\hfill+ \sum_{i=p+1}^n \left(\frac{\theta(i\vert \alpha)}{\langle\theta_i(t-)\rangle} -1 \right) \Big(d\hat N_i(t)-\langle\theta_i(t-)\rangle dt\Big)\Bigg],\,\,\alpha\in\mathcal P,
   \end{multline} 
   where $\displaystyle\langle r_i(t)\rangle=\sum_{\gamma}r(i\vert \gamma)q_\gamma(t)$ and $\displaystyle\langle\theta_i(t)\rangle=\sum_{\gamma}\theta(i\vert \gamma)q_\gamma(t)$, for all $t\geq0$.
   
   In particular, the stochastic processes $(q_\alpha(t)),\alpha\in\mathcal P$ are $(\mathcal F_t)$ martingales. As solution of Dade-Doleans type of SDEs, they can be expressed in the following form
 \begin{multline}\label{eq_dade}
q_\alpha(t)=q_\alpha(0)\times\exp\Bigg[\sum_{i=0}^p\left(\int_0^t\big(r(i\vert \alpha)-\langle r_i(s-)\rangle\big)dW_i(s)-\frac{1}{2}\int_0^t\big(r(i\vert \alpha)-\langle r_i(s-)\rangle\big)^2ds\right)\Bigg]\hfill\\\hfill\times
\prod_{i=p+1}^n\prod_{s\leq t}\left(1+\left(\frac{\theta(i\vert\alpha)}{\langle\theta_j(s-)\rangle}-1\right)\Delta\hat{N}_i(t)\right)\times\exp \Bigg[-\int_0^t\left(\theta(i\vert\alpha)-\langle\theta_j(s-)\rangle\right))ds\Bigg]\\
\hphantom{q_\alpha(t)=}=q_\alpha(0)\times\exp\Bigg[\sum_{i=0}^p\left(\int_0^t\big(r(i\vert \alpha)-\langle r_i(s-)\rangle\big)dW_i(s)-\frac{1}{2}\int_0^t\big(r(i\vert \alpha)-\langle r_i(s-)\rangle\big)^2ds\right)\hfill\\\hfill
 +\sum_{i=p+1}^n\left(\int_0^t\ln\left(\frac{\theta(i\vert\alpha)}{\langle\theta_j(s-)\rangle}\right)d\tilde N_i(s)-\int_0^t\big(\theta(i\vert\alpha)-\langle\theta_j(s-)\rangle\big)ds\right)\Bigg].
\end{multline}  
   \end{thm}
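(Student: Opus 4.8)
The plan is to obtain \eqref{eq:q_alpha(t)} by projecting the master equation \eqref{eq:def_trajectory2} onto the diagonal coefficient $q_\alpha(t)=\langle\alpha\vert\rho(t)\vert\alpha\rangle$, exploiting that $H$ and all the $C_i$ are diagonal in $\mathcal P$. Writing $C_i\vert\alpha\rangle=c(i\vert\alpha)\vert\alpha\rangle$ and $C_i^*\vert\alpha\rangle=\overline{c(i\vert\alpha)}\vert\alpha\rangle$, I would compute each term of $\langle\alpha\vert d\rho(t)\vert\alpha\rangle$ separately. The Hamiltonian part gives $\langle\alpha\vert[H,\rho]\vert\alpha\rangle=\epsilon(\alpha)q_\alpha-q_\alpha\epsilon(\alpha)=0$, and the dissipative part cancels too, since $\langle\alpha\vert C_i\rho C_i^*\vert\alpha\rangle=\theta(i\vert\alpha)q_\alpha$ while $\langle\alpha\vert C_i^*C_i\rho\vert\alpha\rangle=\langle\alpha\vert\rho C_i^*C_i\vert\alpha\rangle=\theta(i\vert\alpha)q_\alpha$, so that $\langle\alpha\vert L(\rho)\vert\alpha\rangle=0$. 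Hence the whole Lindblad contribution drops out and $q_\alpha$ carries no standalone $dt$ term, which is already the germ of the martingale property.

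For the noise terms I would compute the coefficients directly. The diffusion coefficient is $\langle\alpha\vert H_i(\rho)\vert\alpha\rangle=(c(i\vert\alpha)+\overline{c(i\vert\alpha)})q_\alpha-\mathrm{Tr}[(C_i+C_i^*)\rho]q_\alpha=(r(i\vert\alpha)-\langle r_i\rangle)q_\alpha$, after noting $\mathrm{Tr}[(C_i+C_i^*)\rho]=\sum_\gamma r(i\vert\gamma)q_\gamma=\langle r_i\rangle$. For the jump terms I would first identify $v_i(\rho)=\mathrm{Tr}[C_i\rho C_i^*]=\sum_\gamma\theta(i\vert\gamma)q_\gamma=\langle\theta_i\rangle$ and $\langle\alpha\vert J_i(\rho)\vert\alpha\rangle=\theta(i\vert\alpha)q_\alpha$, so that $\langle\alpha\vert(J_i(\rho)/v_i(\rho)-\rho)\vert\alpha\rangle=(\theta(i\vert\alpha)/\langle\theta_i\rangle-1)q_\alpha$. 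Substituting these three computations into \eqref{eq:def_trajectory2} yields \eqref{eq:q_alpha(t)} verbatim. The martingale property then follows from two observations: the equation has no drift, so $q_\alpha$ is a local martingale (the compensated integrals against $d\hat N_i-\langle\theta_i\rangle dt$ are local martingales precisely because $\langle\theta_i(t-)\rangle dt=v_i(\rho(t-))dt$ is the stochastic intensity of $\hat N_i$, as recalled above); and $q_\alpha$ is bounded, since $0\le q_\alpha(t)\le\sum_\gamma q_\gamma(t)=\mathrm{Tr}[\rho(t)]=1$ because $\rho(t)$ remains a density matrix. A bounded local martingale is a true martingale.

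Finally, to obtain the Dol\'eans--Dade form \eqref{eq_dade} I would read \eqref{eq:q_alpha(t)} as the linear equation $dq_\alpha=q_\alpha(t-)\,dM_\alpha(t)$ driven by the martingale $M_\alpha$ whose continuous part is $\sum_{i=0}^p\int_0^t(r(i\vert\alpha)-\langle r_i(s-)\rangle)dW_i(s)$, whose jumps at a firing of $\hat N_i$ equal $\theta(i\vert\alpha)/\langle\theta_i(s-)\rangle-1$, and whose continuous finite-variation part is $-\sum_{i=p+1}^n\int_0^t(\theta(i\vert\alpha)-\langle\theta_i(s-)\rangle)ds$. Its unique solution is the stochastic exponential $q_\alpha(0)\,\mathcal E(M_\alpha)(t)$. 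Here I would use the independence of the Wiener processes to get $[M_\alpha^c,M_\alpha^c]_t=\sum_{i=0}^p\int_0^t(r(i\vert\alpha)-\langle r_i(s-)\rangle)^2ds$ with no cross terms, and note that $1+\Delta M_\alpha=\theta(i\vert\alpha)/\langle\theta_i(s-)\rangle$ at each jump, producing the product on the first line of \eqref{eq_dade}. The one point requiring care is the bookkeeping in $\mathcal E(M_\alpha)=\exp(M_\alpha-\tfrac12[M_\alpha^c,M_\alpha^c])\prod(1+\Delta M_\alpha)e^{-\Delta M_\alpha}$: the discrete integral $\sum_{i}\int_0^t(\theta(i\vert\alpha)/\langle\theta_i(s-)\rangle-1)d\hat N_i$ inside $M_\alpha$ exactly cancels the $\prod e^{-\Delta M_\alpha}$ correction, leaving only the continuous compensator and the raw product. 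Rewriting that product as $\exp\big(\sum_i\int_0^t\ln(\theta(i\vert\alpha)/\langle\theta_i(s-)\rangle)d\hat N_i\big)$ then gives the second line. The only genuine obstacle is making sure every integrand is well defined, that is, handling the event $\langle\theta_i(t-)\rangle=0$; this is harmless, since there $v_i(\rho(t-))=0$ forces $d\hat N_i=0$ and $q_\alpha$ does not move.
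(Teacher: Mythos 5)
Your proposal is correct and follows essentially the same route as the paper: project the stochastic master equation onto the diagonal entry $\rho_{\alpha\alpha}$ using the diagonality of $H$ and the $C_i$'s, observe that $(L(\rho))_{\alpha\alpha}=0$ so there is no drift, conclude the martingale property from boundedness of a driftless local martingale, and read off the Dol\'eans--Dade exponential. You simply carry out explicitly the matrix-element computations and the stochastic-exponential bookkeeping that the paper states without detail.
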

In the case where $\theta(i|\alpha)=0$, for some $i=p+1,\ldots,n$, if a jump of the corresponding $\hat N_i$ occurs at some time $t$, one can see that $q_\alpha(t)$ vanishes ($q_\alpha(u)=0,$ for all $u\geq t$). In this situation in order to give a sense to the second expression one can consider that $\ln(0)=-\infty$ and $\exp(-\infty)=0$. Nevertheless the second expression will be used only when $\theta(i|\alpha)>0$ for all $\alpha\in\mathcal P$ and for all $i=p+1,\ldots,n$. Let us stress that in this situation, if $q_\alpha(0)\neq0$, we have $q_\alpha(t)>0,$ for all $t\geq0$. Although, in section \ref{sec:expo_conv}, we discuss some interesting properties of $(q_\alpha(t))$  when $\theta(i|\alpha)=0$ for some $i$.

  \begin{proof} In order to obtain the expression \eqref{eq:q_alpha(t)}, we have to compute $d\rho_{\alpha\alpha}(t)$ by using \eqref{eq:def_trajectory2}. To this end we have to plug the diagonal condition into the expression of $L$, $H_i$, $J_i$ and $v_i$. This way we can compute the following expression.
  \begin{equation}
   \begin{split}
    v_i(\rho)&=\sum_{\alpha \in \mathcal P} \vert c(i\vert \alpha)\vert^2\rho_{\alpha\alpha},\,\,i=p+1,\ldots,n\\
    (J_i(\rho))_{\alpha\beta}&=\rho_{\alpha\beta}\times c(i\vert \alpha)\overline{c(i\vert \beta)},\,\,i=p+1,\ldots,n\\
    (H_i(\rho))_{\alpha\beta}&=\rho_{\alpha\beta}\times\left(c(i\vert \alpha)+\overline{c(i\vert \beta)} - \sum_{\gamma\in\mathcal P}(c(i\vert \gamma)+\overline{c(i\vert \gamma)})\rho_{\gamma\gamma}\right),\;i=0,\ldots,p\\
    (L(\rho))_{\alpha\beta}&=\rho_{\alpha\beta}\times\left(-i(\epsilon_\alpha -\epsilon_\beta) + \sum_{i=0}^n c(i\vert \alpha)\overline{c(i\vert \beta)} - \frac12 (\vert c(i\vert \alpha) \vert^2  + \vert c(i\vert \beta) \vert^2\right).
   \end{split}
  \end{equation}
  
  From now on we only need the expression of the diagonal elements $\alpha=\beta$ in the pointer basis. In other words the stochastic differential equations for $ (q_\alpha(t))$ do not depend on the off diagonal elements of the system state. This way remarking also that $(L(\rho))_{\alpha\alpha}=0$ for any $\alpha \in \mathcal P$, it is easy to derive Equation \eqref{eq:q_alpha(t)}.
  
  The second part follows from the fact that the processes $(W_i(t))$ and $(\hat{N}_i(t)-\int_0^tv_i(\rho(s-))ds)$ are $(\mathcal F_t)$ martingales. This way the stochastic processes $(q_\alpha(t))$ are local martingales but since they are bounded they are true martingales. The solution \eqref{eq_dade} is the usual expression of the solution of a Dade-Doleans SDE. 
  \end{proof}
 
We are now equipped to prove the equivalence between the diagonal assumption and the nondemolition condition.
  \begin{thm}
   The quantum stochastic master equation \eqref{eq:def_trajectory2} fulfills a nondemolition condition for $\mathcal P$ if and only if $H$ and all the operators $C_i$ are diagonal in the basis $\mathcal P$. 
  \end{thm}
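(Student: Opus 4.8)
The plan is to treat the two implications separately, reducing the nondemolition statement to the constancy of the solution started from a pointer state. Since the coefficients of \eqref{eq:def_trajectory2} do not depend on time, I may take $s=0$ and $\rho_0=\vert\alpha\rangle\langle\alpha\vert$ throughout, and then let $\alpha$ range over $\mathcal P$.

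For the implication ($\Leftarrow$), assuming $H$ and the $C_i$ are diagonal, I would verify that the constant process $\rho(t)\equiv\vert\alpha\rangle\langle\alpha\vert$ solves \eqref{eq:def_trajectory2}. Diagonality gives $C_i\vert\alpha\rangle\langle\alpha\vert C_i^*=\theta(i\vert\alpha)\vert\alpha\rangle\langle\alpha\vert$ and $[H,\vert\alpha\rangle\langle\alpha\vert]=0$, so that $L(\vert\alpha\rangle\langle\alpha\vert)=0$ and $H_i(\vert\alpha\rangle\langle\alpha\vert)=0$; moreover either $v_i(\vert\alpha\rangle\langle\alpha\vert)=\theta(i\vert\alpha)=0$, in which case the corresponding counting process has vanishing intensity and never jumps, or $J_i(\vert\alpha\rangle\langle\alpha\vert)/v_i(\vert\alpha\rangle\langle\alpha\vert)=\vert\alpha\rangle\langle\alpha\vert$, so the jump increment vanishes as well. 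Hence every increment is zero and the constant process solves the equation; uniqueness of the solution of \eqref{eq:def_trajectory2} then forces $\rho(t)=\rho(s)$ almost surely, which is exactly the nondemolition property.

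The substantial direction is ($\Rightarrow$). Here the hypothesis says that the unique solution issued from $\rho_0=\vert\alpha\rangle\langle\alpha\vert$ is almost surely constant, so the semimartingale $\rho(t)-\rho_0$ is identically zero. I would exploit uniqueness of the canonical semimartingale decomposition: the continuous martingale part, the purely discontinuous part, and the finite-variation part of this zero process must each vanish. The continuous martingale part is $\sum_{i=0}^p H_i(\rho_0)W_i(t)$, whose entries have quadratic variation $\sum_{i}\vert (H_i(\rho_0))_{\beta\gamma}\vert^2\,t$; its vanishing forces $H_i(\rho_0)=0$ for $i=0,\dots,p$, and computing entries shows the off-diagonal entries of the $\alpha$-th column of $C_i$ vanish. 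For the jump indices $i=p+1,\dots,n$ the discontinuous part is constant only if every admissible jump is trivial, i.e.\ either $v_i(\rho_0)=0$ or $J_i(\rho_0)=v_i(\rho_0)\rho_0$; in both cases $(C_i)_{\beta\alpha}=0$ for $\beta\neq\alpha$, since $(J_i(\rho_0))_{\beta\beta}=\vert(C_i)_{\beta\alpha}\vert^2$. Letting $\alpha$ range over $\mathcal P$ shows every $C_i$ is diagonal. Finally, once all the $C_i$ are diagonal their contribution to $L$ vanishes on $\rho_0$, so the remaining finite-variation part reads $L(\rho_0)=-i[H,\rho_0]=0$; reading off the $(\beta,\alpha)$ entry gives $H_{\beta\alpha}=0$ for $\beta\neq\alpha$, so $H$ is diagonal too.

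The main obstacle I anticipate is making the ``constant solution forces each coefficient to vanish'' step genuinely rigorous, rather than merely extracting $L(\rho_0)=0$ in expectation: one must use the independence of the driving noises through the uniqueness of the semimartingale decomposition to conclude $H_i(\rho_0)=0$ and the triviality of the jumps simultaneously. A secondary but necessary point of care is the degenerate case $v_i(\rho_0)=0$, where the jump term is vacuous and diagonality of the $\alpha$-th column must instead be deduced from the vanishing of the entire column; the argument should keep track of this case throughout.
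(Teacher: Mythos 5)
Your proof is correct, and in both directions it takes a somewhat different route from the paper's. For the implication ``diagonal $\Rightarrow$ nondemolition'' the paper does not verify directly that the constant process solves the equation; instead it invokes the previously established Dade--Doleans representation \eqref{eq_dade} of the $q_\beta(t)$ to conclude that $q_\beta(t)=0$ for $\beta\neq\alpha$ and $q_\alpha(t)=1$, and then uses positivity and ${\rm Tr}[\rho(t)]=1$ to force $\rho(t)=\vert\alpha\rangle\langle\alpha\vert$. Your verification-plus-uniqueness argument is more self-contained (it does not rely on the earlier theorem) and equally valid. For the converse, the paper is more economical than you anticipate: it only extracts $L(\vert\alpha\rangle\langle\alpha\vert)=0$ by taking conditional expectations and using that the noise terms are martingales, and this single identity already suffices, because the diagonal entry $L(\vert\alpha\rangle\langle\alpha\vert)_{\beta\beta}=\sum_i\vert(C_i)_{\beta\alpha}\vert^2$ is a sum of nonnegative terms whose vanishing kills the whole $\alpha$-column of every $C_i$ at once, after which the $(\beta,\alpha)$ entry yields $H_{\beta\alpha}=0$. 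So the ``obstacle'' you flag --- that the expectation argument alone might not separate the contributions of the different noises --- is not actually an obstacle here; the positivity structure of the Lindblad drift does the separation for free. Your heavier route through the uniqueness of the semimartingale decomposition is nevertheless rigorous and buys slightly more (it shows each $H_i(\vert\alpha\rangle\langle\alpha\vert)=0$ and the triviality of each jump individually), and your explicit treatment of the degenerate case $v_i(\vert\alpha\rangle\langle\alpha\vert)=0$ is a point the paper handles implicitly through the drift computation.
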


\begin{proof}
 Let us first prove that the diagonal condition imply the nondemolition condition. We need to prove that if at time $s$, $\rho(s)=\vert\alpha\rangle\langle\alpha\vert$, then at any time $t>s$, $\rho(t)=\vert \alpha\rangle\langle\alpha \vert$ almost surely. Since $\rho(t)$ is a Markovian process, we can, without loss of generality, set $s=0$.

Put $\rho(0)=\vert\alpha\rangle\langle\alpha\vert$ for some $\alpha\in\mathcal P$. We have $q_\beta(0)=0$ for any $\beta\neq\alpha, \beta\in \mathcal P$ and ${\rm Tr}[\rho(t)]=1$ for any time $t$. Then, looking at \eqref{eq_dade}, we have $q_\beta(t)=0$ and $q_\alpha(t)=1$ for all $t\geq0$ almost surely.  It implies
\[
\rho(t)=\vert \alpha\rangle\langle\alpha \vert, \,\, \forall t\geq0,\,\,\textrm{a.s}.
\]

We now prove that the nondemolition condition implies the diagonal assumption.  If at time $s$, $\rho(s)=\vert \alpha\rangle\langle\alpha\vert$, then for any time $t>s$, $\rho(t)=\vert \alpha\rangle\langle\alpha\vert,$ almost surely. The expectation of $\rho(t)$ conditioned on $\rho(s)=\vert \alpha\rangle\langle \alpha \vert$ with $\alpha \in \mathcal P$ is:
\begin{align*}
\mathbb E\Big[\rho(t)&\Big|\rho(s)=\vert \alpha\rangle\langle \alpha \vert\Big]-\rho(s)=\mathbb E\left[\left.  \int_s^t L(\rho(u-))du\right|\rho(s)=\vert \alpha\rangle\langle \alpha \vert\right] \\
&+\mathbb E\left[\left.  \sum_{i=0}^p \int_s^t H_i(\rho(u-)) dW_i(u)\right|\rho(s)=\vert \alpha\rangle\langle \alpha \vert\right] \\
&+\mathbb E\left[\left.  \sum_{i=p+1}^n\int_s^t\left(\frac{J_i(\rho({u-}))}{v_i(\rho({u-}))}-\rho({u-})\right)(d\hat{N}_i(u)-v_i(\rho(u-))du)\right|\rho(s)=\vert \alpha\rangle\langle \alpha \vert\right] .
\end{align*}
Since $(W_i(t))$ and $(\hat{N}_i(t)-\int_0^t v_i(\rho(u-))du)$ are  martingales,
\begin{align*}
\mathbb E\Big[\rho(t)\Big|\rho(s)=\vert \alpha\rangle\langle \alpha \vert\Big]-\rho(s)=\int_s^t\mathbb E\Big[   L(\rho(u))\Big|\rho(s)=\vert \alpha\rangle\langle \alpha \vert\Big]du.
\end{align*}
At this stage, since $\rho(u)=\rho(s)$ for all $u\geq s$ almost surely, we get
\begin{align*}
0=&\mathbb E\Big[\rho(t)\Big |\rho(s)=\vert \alpha\rangle\langle \alpha \vert\Big]-\rho(s)\\
=&L(\rho(s))(t-s),\forall t\geq s\,\,\textrm{a.s}.
\end{align*}

Let  $\beta \in \mathcal P, \beta \neq \alpha$. The condition $L(\vert \alpha\rangle\langle \alpha \vert)_{\beta\beta}=0$ implies $(C_i)_{\beta\alpha}=0$ for all $i=0,\ldots,n$.
Using this result, the condition $L(\vert \alpha\rangle\langle \alpha \vert)_{\beta\alpha}=0$ implies $H_{\beta\alpha}=0$. Hence, $H$ and all the $C_i$'s must be diagonal in the basis $\mathcal P$.
  \end{proof}

In the next section  we use the martingale property of $(q_\alpha(t))$ to study its long time behavior. Before let us prove that this martingale property is equivalent to the nondemolition condition.
  \begin{prop}
   The processes $(q_\alpha(t))$ are $(\mathcal F_t)$ martingales if and only if  \eqref{eq:def_trajectory2} fulfills a nondemolition condition for $\mathcal P$.
  \end{prop}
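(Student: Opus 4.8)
The plan is to deduce the Proposition from the two results already established: since Theorem~2 identifies the nondemolition condition with the diagonality of $H$ and of all the $C_i$ in the basis $\mathcal P$, it suffices to show that \emph{the martingale property is equivalent to this same diagonal assumption}. One implication is then immediate: if the nondemolition condition holds, $H$ and the $C_i$ are diagonal by Theorem~2, and Theorem~1 asserts that each $(q_\alpha(t))$ is an $(\mathcal F_t)$ martingale. So the real content is the converse, namely that the martingale property forces diagonality.

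For the converse I would start from the semimartingale decomposition of $q_\alpha(t)=\rho_{\alpha\alpha}(t)$ obtained by reading the $(\alpha,\alpha)$ entry of \eqref{eq:def_trajectory2}. Because the $(W_i(t))$ and the compensated counting processes $(\hat N_i(t)-\int_0^t v_i(\rho(s-))\,ds)$ are martingales and all the relevant integrands are bounded, the only bounded-variation part of $q_\alpha$ is the drift $\int_0^t (L(\rho(s-)))_{\alpha\alpha}\,ds$. If $(q_\alpha(t))$ is a martingale, subtracting its (bounded, hence genuine) martingale part shows that this absolutely continuous process is itself a martingale starting at $0$; being continuous and of finite variation it must vanish identically, so $(L(\rho(s-)))_{\alpha\alpha}=0$ for Lebesgue-almost every $s$, almost surely. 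Letting the initial condition $\rho_0$ range over all of $\mathcal S(\mathcal H)$ and using right-continuity of $\rho$ at $t=0$ (equivalently dividing the drift by $t$ and letting $t\to0$), I would upgrade this to the pointwise statement $(L(\rho))_{\alpha\alpha}=0$ for every state $\rho\in\mathcal S(\mathcal H)$ and every $\alpha\in\mathcal P$.

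It then remains to extract diagonality from this family of scalar identities. Taking $\rho=\vert\alpha\rangle\langle\alpha\vert$ and a pointer index $\beta\neq\alpha$, the Hamiltonian and the $-\frac12\{C_i^*C_i,\cdot\}$ terms drop out, leaving $(L(\vert\alpha\rangle\langle\alpha\vert))_{\beta\beta}=\sum_{i=0}^n\vert (C_i)_{\beta\alpha}\vert^2$; its vanishing forces $(C_i)_{\beta\alpha}=0$ for all $i$, and ranging over $\alpha,\beta$ shows every $C_i$ is diagonal, exactly the computation already used in Theorem~2. Once the $C_i$ are diagonal the dissipative part of $(L(\rho))_{\alpha\alpha}$ cancels identically, so $(L(\rho))_{\alpha\alpha}=(-i[H,\rho])_{\alpha\alpha}=2\,\mathrm{Im}\big(\sum_\gamma H_{\alpha\gamma}\overline{\rho_{\alpha\gamma}}\big)$. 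Probing this with the two superposition states $\tfrac{1}{\sqrt2}(\vert\alpha\rangle+\vert\gamma\rangle)$ and $\tfrac{1}{\sqrt2}(\vert\alpha\rangle+i\vert\gamma\rangle)$ isolates the single off-diagonal entry and yields $\mathrm{Im}\,H_{\alpha\gamma}=0$ and $\mathrm{Re}\,H_{\alpha\gamma}=0$ respectively, hence $H_{\alpha\gamma}=0$ for $\gamma\neq\alpha$, so $H$ is diagonal as well. By Theorem~2 diagonality is precisely the nondemolition condition, which closes the equivalence.

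I expect the only delicate point to be the passage from ``the process $(q_\alpha(t))$ is a martingale'' to the pointwise identity $(L(\rho))_{\alpha\alpha}=0$ for every state: this rests on the uniqueness of the canonical decomposition of the special semimartingale $q_\alpha$ (a continuous finite-variation martingale is constant) combined with the localization at $t=0$ obtained by varying $\rho_0$. The subsequent algebra extracting diagonality of the $C_i$ and of $H$ is routine and follows the pattern of the proof of Theorem~2.
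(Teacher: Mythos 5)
Your proposal is correct and follows essentially the same route as the paper: both directions reduce to the diagonality of $H$ and the $C_i$'s, with the converse obtained by forcing $(L(\rho))_{\alpha\alpha}=0$ for all states and then probing with $\vert\beta\rangle\langle\beta\vert$, $\tfrac{1}{\sqrt2}(\vert\alpha\rangle+\vert\beta\rangle)$ and $\tfrac{1}{\sqrt2}(\vert\alpha\rangle+i\vert\beta\rangle)$. Your justification of the step from the martingale property to the vanishing of the drift (uniqueness of the special semimartingale decomposition plus localization at $t=0$ by varying $\rho_0$) is actually more explicit than the paper's one-line assertion.
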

  \begin{proof}
   We already proved that the nondemolition condition implies the martingale property of $(q_\alpha(t))$. Let us prove the converse.
   
   We suppose that for any $\alpha \in \mathcal P$, the process $(\rho_{\alpha\alpha}(t))$ is a martingale.
This assumption is true only if the drift  part of \eqref{eq:def_trajectory2} for $\rho_{\alpha\alpha}(t)$ is null whatever is the initial state. Hence, for any arbitrary $\rho \in \mathcal S(\mathcal H)$, we must have  $L(\rho)_{\alpha\alpha}=0$. Take $\rho=\vert \beta\rangle\langle \beta \vert$ with $\alpha\neq\beta$. As seen earlier, the condition $L(\rho)_{\alpha\alpha}=0$, implies that the $C_i$'s must be diagonal in the basis $\mathcal P$. Now put $\vert \alpha + \beta \rangle =\frac1{\sqrt{2}}(\vert \alpha \rangle + \vert \beta \rangle)$. Take $\rho=\vert \alpha +\beta \rangle\langle \alpha +\beta \vert$. The condition $L(\rho)_{\alpha\alpha}=0$ implies $H_{\alpha\beta}-H_{\beta\alpha}=0$. Put $\vert \alpha +i\beta \rangle=\frac1{\sqrt{2}}(\vert \alpha \rangle + i\vert \beta \rangle)$. Take $\rho=\vert \alpha +i\beta \rangle\langle \alpha +i\beta \vert$. The condition $L(\rho)_{\alpha\alpha}=0$ implies $H_{\alpha\beta}+H_{\beta\alpha}=0$. We can thus conclude that $H$ must also be diagonal in the 
basis 
$\mathcal P$. 
   
   As proved earlier this diagonal property is equivalent to the nondemolition condition.
  \end{proof}
  
The next section is devoted to the large time behavior of $(q_\alpha(t))$ and to interpretations of the obtained convergence in terms of wave function collapse.
  
 \section{Convergence and wave function collapse}
 
 \subsection{Wave function collapse}
 In this section we show the almost sure convergence of the processes $(q_\alpha(t))$ when $t$ goes to infinity. Under some non degeneracy conditions, we can identify the limit random variables $(q_\alpha(\infty))$. More precisely, in this context we show that $q_\alpha(\infty)$ is equal to $1$ for a pointer $\Upsilon\in \mathcal P$ and $0$ for the others. The pointer state $\Upsilon$ is a random variable and we find its distribution. We next show that this imply that $(\rho(t))$ converges almost surely to one of the pointer states $\vert\alpha\rangle\langle\alpha\vert$. In particular, we show that the probability for the limit pointer state to be $\vert \alpha\rangle\langle\alpha\vert$ is $q_\alpha(0)=\rho_{\alpha\alpha}(0)$. This is what is predicted by the von Neumann projection postulate if a direct measurement would have been performed at time $0$. Thinking of the limit state in terms of random variable, in the limit $t\to \infty$, the system state is a random variable with the same law as the one 
predicted by von 
Neumann projection postulate for a 
direct measurement at time $0$.
 
In the following subsections we present some useful properties implied by this convergence.
 
 Let us express our non degeneracy condition
 
 \noindent\textbf{Assumption (ND):} For any $(\alpha,\beta)$ with $\alpha\neq\beta$ there exists $i \in \{0,1, \ldots, n\}$ such that
 \begin{itemize}
 \item either $r(i|\alpha)\neq r(i|\beta)$ if $i\le p$
 \item or $\theta(i|\alpha)\neq \theta(i|\beta)$ if $i>p.$
 \end{itemize}

 \begin{thm}\label{thm:conv}  Under Assumption (ND), there exist random variables $q_\alpha(\infty),\alpha\in\mathcal P$ which take values in $\{0,1\}$ such that 
 \begin{eqnarray}
 \lim_{t\rightarrow\infty}q_\alpha(t)&=&q_\alpha(\infty),\,\,\forall \alpha\in\mathcal P,
 \end{eqnarray}
almost surely and in $L^1$ norm. Moreover, we have $\mathbb E[q_\alpha(\infty)\vert\mathcal F_t]=q_\alpha(t)$, for all $\alpha\in\mathcal P$ and for all $t\geq 0$. 

The random variables $q_\alpha(\infty),\alpha\in\mathcal P$, satisfy
 \begin{eqnarray}
 \mathbb P(q_\alpha(\infty)=1)&=&q_0(\alpha), \forall\alpha\in\mathcal P,\\
 q_\alpha(\infty)q_\beta(\infty)&=&0,\,\, \forall\alpha\neq\beta,\,\, \textrm{a.s}.
 \end{eqnarray}

  As a consequence there exists a random variable $\Upsilon$ with values in $\mathcal P$ such that $$\mathbb P(\Upsilon=\alpha)=q_\alpha(0),\;\forall\alpha\in\mathcal P$$ and such that 
  $$\lim_{t \to \infty} \rho(t)=\vert\Upsilon\rangle\langle\Upsilon\vert,\,\,\textrm{a.s}.$$
 \end{thm}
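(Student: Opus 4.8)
The plan is to run the argument in three stages: first extract the limits from martingale theory, then establish the dichotomy $q_\alpha(\infty)\in\{0,1\}$ from the quadratic variation, and finally read off the law of the limit and the convergence of $(\rho(t))$.

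\textbf{Stage 1 (existence of the limits).} By the preceding theorem each $(q_\alpha(t))$ is an $(\mathcal F_t)$-martingale, and it is bounded since $0\le q_\alpha(t)={\rm Tr}[\rho(t)|\alpha\rangle\langle\alpha|]\le 1$. Doob's martingale convergence theorem then produces a random variable $q_\alpha(\infty)$ with $q_\alpha(t)\to q_\alpha(\infty)$ almost surely; boundedness upgrades this to convergence in $L^1$ and to uniform integrability, which yields $\mathbb E[q_\alpha(\infty)\mid\mathcal F_t]=q_\alpha(t)$ for every $t\ge 0$. This already gives the stated convergence and the conditional-expectation identity.

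\textbf{Stage 2 (the dichotomy).} This is the heart of the proof and the step I expect to be the main obstacle. Since $(q_\alpha(t))$ is a bounded, hence $L^2$, martingale, its predictable quadratic variation $\langle q_\alpha\rangle_t$, read off from \eqref{eq:q_alpha(t)}, is integrable:
\begin{multline*}
\langle q_\alpha\rangle_\infty=\sum_{i=0}^p\int_0^\infty q_\alpha(s)^2\big(r(i|\alpha)-\langle r_i(s)\rangle\big)^2\,ds\\
+\sum_{i=p+1}^n\int_0^\infty q_\alpha(s)^2\Big(\tfrac{\theta(i|\alpha)}{\langle\theta_i(s)\rangle}-1\Big)^2\langle\theta_i(s)\rangle\,ds,
\end{multline*}
with $\mathbb E[\langle q_\alpha\rangle_\infty]=\mathbb E[q_\alpha(\infty)^2]-q_\alpha(0)^2<\infty$, so $\langle q_\alpha\rangle_\infty<\infty$ almost surely. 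Work on the event $A_\alpha=\{q_\alpha(\infty)>0\}$. There $q_\alpha(s)$ is eventually bounded below by a positive constant, and since $q_\gamma(s)\to q_\gamma(\infty)$ for every $\gamma$ we have $\langle r_i(s)\rangle\to\langle r_i(\infty)\rangle$ and $\langle\theta_i(s)\rangle\to\langle\theta_i(\infty)\rangle$. Using the elementary fact that a nonnegative integrand which converges and has finite integral must tend to $0$, I deduce on $A_\alpha$ that $r(i|\alpha)=\langle r_i(\infty)\rangle$ for each $i\le p$, and, for each $i>p$ with $\langle\theta_i(\infty)\rangle>0$, that $\theta(i|\alpha)=\langle\theta_i(\infty)\rangle$; in the remaining case $\langle\theta_i(\infty)\rangle=0$, positivity of $\theta$ and of the $q_\gamma(\infty)$ forces $\theta(i|\alpha)=0=\langle\theta_i(\infty)\rangle$ directly. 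Thus on $A_\alpha$ one has $r(i|\alpha)=\langle r_i(\infty)\rangle$ and $\theta(i|\alpha)=\langle\theta_i(\infty)\rangle$ for all $i$.

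Now fix $\alpha\ne\beta$ and suppose $\mathbb P(A_\alpha\cap A_\beta)>0$. On $A_\alpha\cap A_\beta$ the previous paragraph forces $r(i|\alpha)=\langle r_i(\infty)\rangle=r(i|\beta)$ for all $i\le p$ and $\theta(i|\alpha)=\langle\theta_i(\infty)\rangle=\theta(i|\beta)$ for all $i>p$, contradicting Assumption (ND). Hence $q_\alpha(\infty)q_\beta(\infty)=0$ almost surely whenever $\alpha\ne\beta$, so at most one limit is nonzero. Because $\sum_\gamma q_\gamma(t)={\rm Tr}[\rho(t)]=1$ for all $t$, letting $t\to\infty$ gives $\sum_\gamma q_\gamma(\infty)=1$; combined with mutual exclusion, exactly one limit equals $1$ and the others vanish, so each $q_\alpha(\infty)\in\{0,1\}$.

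\textbf{Stage 3 (law of the limit and convergence of $\rho$).} Since $q_\alpha(\infty)\in\{0,1\}$, the martingale property gives $\mathbb P(q_\alpha(\infty)=1)=\mathbb E[q_\alpha(\infty)]=q_\alpha(0)$. Define $\Upsilon$ as the almost surely unique $\alpha$ with $q_\alpha(\infty)=1$; then $\mathbb P(\Upsilon=\alpha)=q_\alpha(0)$. For the convergence of the full state, the diagonal entries converge by construction, $\rho_{\alpha\alpha}(t)=q_\alpha(t)\to\mathbf 1_{\Upsilon=\alpha}$, while positivity of the density matrix bounds each off-diagonal entry through the $2\times 2$ minor inequality $|\rho_{\alpha\beta}(t)|^2\le q_\alpha(t)\,q_\beta(t)$. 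For $\alpha\ne\beta$ at least one factor tends to $0$, so $\rho_{\alpha\beta}(t)\to 0$; entrywise convergence in finite dimension then yields $\rho(t)\to|\Upsilon\rangle\langle\Upsilon|$ almost surely, completing the proof.
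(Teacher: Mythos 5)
Your proof is correct and follows the same overall strategy as the paper: bounded-martingale convergence for Stage 1, finiteness of the quadratic variation plus Assumption (ND) for the dichotomy, and the martingale identity $\mathbb E[q_\alpha(\infty)]=q_\alpha(0)$ for the law of $\Upsilon$. The two places where your execution genuinely differs are both in your favour. In Stage 2 the paper works in expectation: it uses the It\^o--L\'evy isometry to show $\int_0^\infty\mathbb E[\,\cdot\,]ds<\infty$, deduces that the expectations of the integrands tend to $0$, extracts subsequences along which the integrands vanish almost surely, and then invokes uniqueness of almost sure limits; your pathwise version---$\mathbb E[\langle q_\alpha\rangle_\infty]<\infty$ hence $\langle q_\alpha\rangle_\infty<\infty$ a.s., and a nonnegative convergent integrand with finite integral must tend to $0$---reaches the same identities $q_\alpha(\infty)(r(i|\alpha)-\langle r_i(\infty)\rangle)=0$ and $q_\alpha(\infty)(\theta(i|\alpha)-\langle\theta_i(\infty)\rangle)=0$ without the subsequence extraction, and your separate treatment of the degenerate case $\langle\theta_i(\infty)\rangle=0$ is a point the paper glosses over. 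In Stage 3 the paper deduces $\rho(t)\to|\Upsilon\rangle\langle\Upsilon|$ by a compactness argument (every limit point $\mu$ satisfies ${\rm Tr}[\mu|\Upsilon\rangle\langle\Upsilon|]=1$, hence equals the projector), whereas you control the off-diagonal entries directly via the positivity bound $|\rho_{\alpha\beta}(t)|^2\le q_\alpha(t)q_\beta(t)$; both are valid, and yours is the more explicit and elementary of the two.
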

 \begin{proof} Let $\alpha\in\mathcal P$ be fixed. The almost sure convergence of $(q_\alpha(t))$ follows from the fact that  $(q_\alpha(t))$ are bounded $(\mathcal F_t)$ martingales. More precisely, the family $(q_\alpha(t))$ is uniformly integrable. Therefore there exists $q_\alpha(\infty)$ such that $\lim_{t\rightarrow\infty}q_\alpha(t)=q_\alpha(\infty),$ almost surely and in $L^1$ norm and we have $\mathbb E[q_\alpha(\infty)\vert\mathcal F_t]=q_\alpha(t)$, for all $t\geq 0$. It remains to show that these random variables take values in $\{0,1\}$. To this end, for $\alpha$ being fixed, using It\^o-L\'evy isometry, we have for all $t>0$
 $$\mathbb E[q_\alpha(t)^2]=\mathbb E[[q_\alpha(t), q_\alpha(t)]],$$
 where $[q_\alpha(t), q_\alpha(t)]$ corresponds to the stochastic bracket of $(q_\alpha(t))$.
 We then have
 \begin{eqnarray}
\mathbb E[q_\alpha(t)^2]&=&\sum_{i=0}^p\int_0^t\mathbb E\left[q_\alpha(s)^2(r(i\vert \alpha)-\langle r_i(s)\rangle)^2\right]ds\nonumber\\
&&+\sum_{i=p+1}^n\int_0^t\mathbb E\left[q_\alpha(s)^2\left(\frac{\theta(i\vert \alpha)}{\langle\theta_i(s)\rangle} -1 \right)^2\langle\theta_i(s)\rangle\right] ds
 \end{eqnarray}
 Since the processes $(q_\alpha(t))$ converge almost surely and are bounded, using Lebesgue dominated convergence Theorem, we have that the quantity $\mathbb E[q_\alpha(t)^2]$ converges when $t$ goes to infinity. This implies that
 \begin{eqnarray}\label{cconv}
 \int_0^\infty\mathbb E\left[q_\alpha(s)^2(r(i\vert \alpha)-\langle r_i(s)\rangle)^2\right]ds&<&\infty,\,\, i=0,\ldots,p,\nonumber\\
 \int_0^\infty\mathbb E\left[q_\alpha(s)^2\left(\frac{\theta(i\vert \alpha)}{\langle\theta_i(s)\rangle} -1 \right)^2\langle\theta_i(s)\rangle\right] ds&<&\infty,\,\, i=p+1,\ldots,n.
 \end{eqnarray}
Again by the dominated convergence Theorem, the quantities
 \begin{eqnarray}&&\mathbb E\left[q_\alpha(t)^2(r(i\vert \alpha)-\langle r_i(t)\rangle)^2\right],\,\,i=0,\ldots,p,\,\, \nonumber\\&&\mathbb E\left[\left(\frac{\theta(i\vert \alpha)q_\alpha(t)}{\langle\theta_i(t)\rangle} -q_\alpha(t) \right)^2\langle\theta_i(t)\rangle\right],\,\, i=p+1,\ldots,n
 \end{eqnarray} converge when $t$ goes to infinity. Then from \eqref{cconv} it follows that necessarily
  \begin{eqnarray}
\lim_{t\to\infty} \mathbb E[q_\alpha(t)^2(r(i\vert \alpha)-\langle r_i(t)\rangle)^2]&=&0,\,\, i=0,\ldots,p\\
\lim_{t\to\infty} \mathbb E\left[q_\alpha(t)^2\left(\frac{\theta(i\vert \alpha)}{\langle\theta_i(t)\rangle} -1 \right)^2\langle\theta_i(t)\rangle\right]&=&0,\,\, i=p+1,\ldots,n.
 \end{eqnarray}
 The above $L^2$ convergences imply the almost sure convergences up to an extraction. More precisely, there exist subsequences $(t^i_n)$ such that for all $i=0,\ldots,n,\lim_{n\to\infty}t^i_n=\infty$ and such that almost surely
   \begin{eqnarray}
\lim_{n\to\infty} q_\alpha(t^i_n)^2(r(i\vert \alpha)-\langle r_i(t^i_n)\rangle)^2&=&0,\,\, i=0\ldots,p\\
\lim_{n\to\infty}q_\alpha(t^i_n)^2\left(\frac{\theta(i\vert \alpha)}{\langle\theta_i(t^i_n)\rangle} -1 \right)^2\langle\theta_i(t^i_n)\rangle &=&0,\,\, i=p+1,\ldots,n.
 \end{eqnarray}
Since the processes $(q_\alpha(t))$ converge almost surely, by uniqueness of the almost sure limits and using the boundedness of $\langle \theta_i(t)\rangle$, we can conclude that almost surely, for all $\alpha\in\mathcal P$
 \begin{eqnarray}
\lim_{t\to\infty} q_\alpha(t)^2(r(i\vert \alpha)-\langle r_i(t)\rangle)^2&=&0,\,\, i=0\ldots,p\\
\lim_{t\to\infty}q_\alpha(t)^2(\theta(i\vert \alpha) - \langle\theta_i(t)\rangle)^2 &=&0,\,\, i=p+1,\ldots,n.
 \end{eqnarray}
Then it follows that, almost surely, for all $\alpha\in\mathcal P$
  \begin{eqnarray}
 q_\alpha(\infty)(r(i\vert \alpha)-\langle r_i(\infty)\rangle)&=&0,\,\, i=0\ldots,p\\
 q_\alpha(\infty)\left(\theta(i\vert \alpha) -\langle\theta_i(\infty)\rangle \right) &=&0\,\, i=p+1,\ldots,n.
 \end{eqnarray}
  This way, almost surely, for all $\alpha\neq\beta$
    \begin{eqnarray}
 q_\alpha(\infty)\big(r(i\vert \alpha)-\langle r_i(\infty)\rangle\big)&=&q_\beta(\infty)\big(r(i\vert \beta)-\langle r_i(\infty)\rangle\big)=0,\,\, i=0\ldots,p\\
q_\alpha(\infty)\big(\theta(i\vert \alpha) -\langle\theta_i(\infty)\rangle \big) &=& q_\beta(\infty)\big(\theta(i\vert \beta) -\langle\theta_i(\infty)\rangle \big)=0\,\, i=p+1,\ldots,n.
 \end{eqnarray}
 It follows that, almost surely, for all $\alpha\neq\beta$
 \begin{eqnarray}
q_\alpha(\infty) q_\beta(\infty)(r(i\vert \alpha)-r(i\vert \beta))&=&0,\,\, i=0\ldots,p\\
q_\alpha(\infty)q_\beta(\infty)(\theta(i|\alpha)-\theta(i|\beta))&=&0,\,\, i=p+1,\ldots,n.
 \end{eqnarray}
 Finally using Assumption (ND) one can conclude that  
   \[q_\alpha(\infty)q_\beta(\infty)=0,\,\,\forall\alpha\neq\beta\,\,\textrm{a.s}.\]
 This way, there exists a set $\Omega'$ such that $\mathbb P(\Omega')=1$ and such that for all $\omega\in\Omega'$ there exists a unique $\Upsilon\in\mathcal P$ such that $q_\Upsilon(\infty)(\omega)\neq 0$ and for all $\beta\neq\Upsilon, q_\beta(\infty)(\omega)= 0$. Moreover, since ${\rm Tr}[\rho(t)]=1$, we have that $\sum_{\alpha \in \mathcal P}q_\alpha(\infty)(\omega)=1$. Therefore $q_\Upsilon(\infty)(\omega)=1$. Then we have proved that for all $\alpha,q_\alpha(\infty)\in\{0,1\}$. Using now the martingale property, we have $\mathbb E[q_\alpha(\infty)]=q_\alpha(0)$, which implies that $\mathbb P(q_\alpha(\infty)=1)=q_\alpha(0)$ and the first part of the theorem is proved.

For the second part, let us come back to the definition of $\Upsilon$. This defines a random variable taking values in the set of pointer states $\mathcal P$ (for $\omega\in\Omega\setminus\Omega'$, we can put $\Upsilon(\omega)=\psi$, where $\psi\notin\mathcal P$, this will appear with probability 0). It is then clear that
$$\mathbb P(\Upsilon=\alpha)=\mathbb P(q_\alpha(\infty)=1)=q_\alpha(0),\;\forall\alpha\in\mathcal P.$$
One can note in particular that $q_\alpha(\infty)=\mathbb 1_{\Upsilon=\alpha}$. Now, we are in the position to conclude the proof. Indeed, since $q_\Upsilon(\infty)=1$, we get
$$\lim_{t\to\infty}\mathrm{Tr}[\rho(t)\vert\Upsilon\rangle\langle\Upsilon\vert]=1,\,\,\textrm{a.s}.$$
Now let $\omega\in\Omega'$ and let $\mu$ be a limit of a convergent subsequence $(\rho(t_n)(\omega))$, we have that $\mathrm{Tr}[\mu\vert\Upsilon(\omega)\rangle\langle\Upsilon(\omega)\vert]=1$ which implies that $\mu=\vert\Upsilon(\omega)\rangle\langle\Upsilon(\omega)\vert$. Therefore all convergent subsequences of $(\rho(t)(\omega))$ converge to $\vert\Upsilon(\omega)\rangle\langle\Upsilon(\omega)\vert$ which implies
$$\lim_{t \to \infty} \rho(t)=\vert\Upsilon\rangle\langle\Upsilon\vert,\,\,\textrm{a.s}.$$
 \end{proof}
 
 This result is crucial in the following. In particular it will allow us to use a Girsanov transformation "in infinite time horizon".
 
 \subsection{Exponential rate of convergence}\label{sec:expo_conv}
 
In this section, we study the convergence speed of the processes $(q_\alpha(t))$. In particular, we shall show an exponential convergence. To this end, we study the following quantities
$$\lim_{t\to\infty}\frac{1}{t}\ln\left(\frac{q_\alpha(t)}{q_\gamma(t)}\right),\,\,\alpha,\gamma\in\mathcal P.$$
Since $(q_t(\alpha))$ can vanish in the case where $\theta(i\vert\alpha)=0$ for some $i$, this quantity can be finite or infinite. Furthermore, we limit our study to pointer states $\alpha,\gamma \in \mathcal P$ such that $q_\alpha(0)\neq 0$ and $q_\gamma(0)\neq 0$. Remark, if $q_\beta(0)=0$ for some pointer state $\beta$ we have $q_\beta(t)=0$ for any time $t$.

 First, let us start by studying the case where $q_\alpha(t)>0$ and $q_\gamma(t)>0$, for all $t\geq0$, almost surely. As already discussed this is ensured when for any $i \in \{p+1,\ldots,n\}$, $\theta(i|\alpha)>0$ and $\theta(i|\gamma)>0$. In this case, using \eqref{eq:q_alpha(t)}, we have almost surely, for all $t\geq0$,
 \begin{multline}\label{eq_rapport}
 \frac{q_\alpha(t)}{q_\gamma(t)} =
\frac{q_\alpha(0)}{q_\gamma(0)}\times\exp\Bigg[\sum_{i=0}^p\Bigg(\int_0^t(r(i\vert \alpha)-r(i\vert \gamma))dW_i(s)\hfill\\
 \hfill-\frac 12\int_0^t\Big[(r(i\vert \alpha)-\langle r_i(s-)\rangle)^2-(r(i\vert \gamma)-\langle r_i(s-)\rangle)^2\Big]ds\Bigg)\\\hfill
 +\sum_{i=p+1}^n\left(\int_0^t\ln\left(\frac{\theta(i\vert\alpha)}{\theta(i\vert\gamma)}\right)d\hat N_i(s)-\int_0^t\left(\theta(i\vert\alpha)-\theta(i\vert\gamma)\right)ds\right)\Bigg].
 \end{multline}
 This can be rewritten as 
 \begin{multline}
  \frac{q_\alpha(t)}{q_\gamma(t)}
=
\frac{q_\alpha(0)}{q_\gamma(0)}\times\exp\Bigg[\sum_{i=0}^p\left(\int_0^t(r(i\vert \alpha)-r(i\vert \gamma))dX^\gamma_i(s)-\frac 12\int_0^t(r(i\vert \alpha)-r(i\vert \gamma))^2ds\right)\hfill\\\hfill
 +\sum_{i=p+1}^n\left(\int_0^t\ln\left(\frac{\theta(i\vert\alpha)}{\theta(i\vert\gamma)}\right)d\hat M^\gamma_i(s)-\int_0^t\left(\theta(i\vert\alpha)-\theta(i\vert\gamma)\right)-\ln\left(\frac{\theta(i\vert\alpha)}{\theta(i\vert\gamma)}\right)\theta(i\vert\gamma))ds\right)\Bigg],
 \end{multline}  
 for all $t\geq 0$, where
 \begin{eqnarray}\label{eq:def_X_M}
 X^\gamma_i(t)&=&W_j(t)-\int_0^t\Big[r(i\vert\gamma)-\langle r_i(s-)\rangle\Big]ds,\,\, i=0,\ldots,p\nonumber\\
\hat M^\gamma_i(t)&=&\hat{N}_i(t)-\int_0^t\theta(i\vert\gamma)ds=\hat{N}_i(t)-\theta(i\vert\gamma)t,\,\,i=p+1,\ldots,n
 \end{eqnarray}
for all $t\geq0$. Now we shall see that these processes are martingales under a suitable change of measure. To this end we consider the family of probability measures $(\mathbb Q^t_\gamma)$ defined by 
$$d\mathbb Q^t_\gamma(\omega)=\frac{q_\gamma(t)(\omega)}{q_\gamma(0)}d\mathbb P(\omega),\,\,t\geq0,$$
for all $\gamma \in \mathcal P$ such that $q_\gamma(0)\neq0$.

This family of probability measure is consistent. Moreover, since $\mathbb E[q_\gamma(\infty)\vert\mathcal F_t]=q_\gamma(t)$, for all $t\geq 0$, any element of this family can be extended to a unique probability measure $\mathbb Q_\gamma$ on $\mathcal F_\infty=\mathcal F$. In particular, we have
\begin{equation}
d\mathbb Q_\gamma(\omega)=\frac{q_\gamma(\infty)(\omega)}{q_\gamma(0)}d\mathbb P(\omega)
\end{equation}
 and in terms of filtration we get the following Radon Nykodim formula
\begin{equation}
\mathbb E\left.\left[\frac{d\mathbb Q_\gamma(\omega)}{d\mathbb P(\omega)}\right\vert\mathcal F_t\right]=\frac{q_\gamma(t)}{q_\gamma(0)}=\frac{d\mathbb Q^t_\gamma(\omega)}{d\mathbb P(\omega)},\,\,t\geq0.
\end{equation}
This way we can consider the quantity $\displaystyle\lim_{t\to\infty}\frac{1}{t}\ln\left(\frac{q_\alpha(t)}{q_\gamma(t)}\right)$ under $\mathbb Q_\gamma$. We shall need the following lemma which relies on Girsanov transformation.

\begin{lem}\label{lem_changeofmeasure} Let $\gamma\in\mathcal P$ such that $q_\gamma(0)\neq0$.
Under $\mathbb Q_\gamma$, the processes $(X^\gamma_j(t)),j=0,\ldots,p$ and $(\hat M^\gamma_j(t)),j=p+1,\ldots,n$ are $(\mathcal F_t)$ martingales. More precisely $(X^\gamma_j(t)),j=0,\ldots,p$ are standard $\mathbb Q_\gamma$ Brownian motions and $(\hat{N}_j(t)),j=p+1,\ldots,n$ are usual Poisson processes with deterministic intensities $\theta(i\vert\gamma)$.
\end{lem}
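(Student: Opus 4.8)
The plan is to read off the Radon--Nikodym density explicitly and then apply the Girsanov theorem for jump-diffusions, using the standard principle that a process $M$ is a $\mathbb{Q}_\gamma$-local martingale if and only if the product $M\,q_\gamma$ is a $\mathbb{P}$-local martingale. First I would record that $L_t:=q_\gamma(t)/q_\gamma(0)$ is a strictly positive $\mathbb{P}$-martingale with $L_0=1$, which is precisely the density of $\mathbb{Q}^t_\gamma$ on $\mathcal{F}_t$; the fact that these extend to a single measure $\mathbb{Q}_\gamma$ on $\mathcal{F}_\infty=\mathcal{F}$ is exactly what Theorem~\ref{thm:conv} provides, through $\mathbb{E}[q_\gamma(\infty)\vert\mathcal{F}_t]=q_\gamma(t)$. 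From \eqref{eq:q_alpha(t)}, $L$ is the Dade-Doleans exponential whose Brownian multiplier attached to the $i$-th noise is $r(i\vert\gamma)-\langle r_i(s-)\rangle$ and whose jump multiplier attached to $\hat N_i$ is $\theta(i\vert\gamma)/\langle\theta_i(s-)\rangle$, so that $dL_t=L_{t-}\big[\sum_{i=0}^p(r(i\vert\gamma)-\langle r_i(t-)\rangle)\,dW_i(t)+\sum_{i=p+1}^n(\theta(i\vert\gamma)/\langle\theta_i(t-)\rangle-1)(d\hat N_i(t)-\langle\theta_i(t-)\rangle\,dt)\big]$.

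For the diffusive noises I would apply the It\^o product rule to $X^\gamma_j L$. The continuous martingale part of $X^\gamma_j$ is $W_j$, so the only contribution to the bracket comes from the $dW_j$ term of $dL$, giving $d[X^\gamma_j,L]_t=L_{t-}(r(j\vert\gamma)-\langle r_j(t-)\rangle)\,dt$. Since by \eqref{eq:def_X_M} the finite-variation part of $X^\gamma_j$ is $-\int_0^t(r(j\vert\gamma)-\langle r_j(s-)\rangle)\,ds$, the two drift contributions in $d(X^\gamma_j L)=X^\gamma_j(t-)\,dL_t+L_{t-}\,dX^\gamma_j(t)+d[X^\gamma_j,L]_t$ cancel exactly, leaving only $X^\gamma_j(t-)\,dL_t+L_{t-}\,dW_j(t)$, a $\mathbb{P}$-local martingale. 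Hence each $X^\gamma_j$ is a $\mathbb{Q}_\gamma$-local martingale. As the $X^\gamma_j$ are continuous with $[X^\gamma_j,X^\gamma_k]_t=[W_j,W_k]_t=\delta_{jk}\,t$, L\'evy's characterization upgrades them to independent standard $\mathbb{Q}_\gamma$ Brownian motions.

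For the counting noises I would run the same computation on $\hat M^\gamma_j L$. Here $\hat M^\gamma_j$ and $L$ jump simultaneously only at the jump times of $\hat N_j$, where $\Delta\hat M^\gamma_j=1$ and $\Delta L=L_{t-}(\theta(j\vert\gamma)/\langle\theta_j(t-)\rangle-1)$, so that $d[\hat M^\gamma_j,L]_t=L_{t-}(\theta(j\vert\gamma)/\langle\theta_j(t-)\rangle-1)\,d\hat N_j(t)$. Substituting $d\hat M^\gamma_j=d\hat N_j-\theta(j\vert\gamma)\,dt$ and collecting the terms that are not multiples of $dL_t$, the jump contributions combine into $L_{t-}(\theta(j\vert\gamma)/\langle\theta_j(t-)\rangle)(d\hat N_j(t)-\langle\theta_j(t-)\rangle\,dt)$, a predictable integrand against the $\mathbb{P}$-martingale $d\hat N_j-\langle\theta_j(s-)\rangle\,ds$ and therefore a $\mathbb{P}$-local martingale. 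Thus $\hat M^\gamma_j$ is a $\mathbb{Q}_\gamma$-local martingale, i.e. $\hat N_j$ carries the deterministic $\mathbb{Q}_\gamma$-compensator $\theta(j\vert\gamma)\,t$; Watanabe's characterization then identifies $\hat N_j$ as a homogeneous Poisson process of rate $\theta(j\vert\gamma)$ under $\mathbb{Q}_\gamma$, while the vanishing of all cross-brackets (no common jumps among the $\hat N_j$, and orthogonality of continuous and purely discontinuous martingales) yields their mutual independence and their independence from the $X^\gamma_j$.

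The steps I expect to require the most care are twofold. First, one must justify that the local martingales produced above are genuine martingales and that the measure change is legitimate over the infinite horizon: the integrability follows from the boundedness of $L$ on compacts together with the stochastic-intensity control on $\hat N_j$, and the infinite-horizon validity is precisely what Theorem~\ref{thm:conv} secures. Second, the passage from the bracket and compensator identities to the full distributional statements is the genuinely structural point, where L\'evy's theorem governs the continuous part and Watanabe's theorem the counting part; here one must check not merely that the compensators $\theta(j\vert\gamma)\,t$ are deterministic but that they are linear, so that the resulting point processes are homogeneous Poisson with the stated constant rates.
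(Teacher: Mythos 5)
Your proof is correct: the paper actually states Lemma~\ref{lem_changeofmeasure} without proof, remarking only that it ``relies on Girsanov transformation,'' and your argument supplies exactly that computation --- identifying $L_t=q_\gamma(t)/q_\gamma(0)$ as the Dol\'eans exponential density from \eqref{eq:q_alpha(t)}, checking via the product rule that $X^\gamma_j L$ and $\hat M^\gamma_j L$ are $\mathbb P$-local martingales, and invoking L\'evy's and Watanabe's characterizations. A nice bonus of your writeup is the implicit bound $L_{t-}\,\theta(j\vert\gamma)/\langle\theta_j(t-)\rangle\leq 1/q_\gamma(0)$, which settles the integrability concern you flag and makes the local martingales genuine martingales.
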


The following theorem expresses the exponential convergence speed towards $\Upsilon$. 

\begin{thm}\label{thm:conv_expo} Assume Assumption (ND) is satisfied. Assume that $\alpha,\gamma \in \mathcal P$ are such that $q_\alpha(0)\neq0$, $q_\gamma(0)\neq0$, and such that $\theta(i\vert\alpha)>0$, $\theta(i\vert\gamma)>0$ for all $i=p+1,\ldots,n.$ Then, we have
\begin{multline}
 \lim_{t\to\infty}\frac{1}{t}\ln\left(\frac{q_\alpha(t)}{q_\gamma(t)}\right)=-\frac{1}{2}\sum_{i=0}^p(r(i\vert \alpha)-r(i\vert \gamma))^2+\sum_{i=p+1}^n\theta(i\vert\gamma)\left[1-\frac{\theta(i\vert\alpha)}{\theta(i\vert\gamma)}+\ln\left(\frac{\theta(i\vert\alpha)}{\theta(i\vert\gamma)}\right)\right],
 \end{multline}
 $\mathbb Q_\gamma$ almost surely.
 
 More generally, in terms of the random variable $\Upsilon$, we have 
\begin{multline}
 \lim_{t\to\infty}\frac{1}{t}\ln\left(\frac{q_\alpha(t)}{q_\Upsilon(t)}\right)=-\frac{1}{2}\sum_{i=0}^p(r(i\vert \alpha)-r(i\vert \Upsilon))^2+\sum_{i=p+1}^n\theta(i\vert\Upsilon)\left[1-\frac{\theta(i\vert\alpha)}{\theta(i\vert\Upsilon)}+\ln\left(\frac{\theta(i\vert\alpha)}{\theta(i\vert\Upsilon)}\right)\right],
 \end{multline}
 $\mathbb P$ almost surely.
\end{thm}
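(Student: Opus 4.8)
The plan is to read off the answer directly from the second representation of the ratio $q_\alpha(t)/q_\gamma(t)$ displayed just above Lemma~\ref{lem_changeofmeasure}, by taking $\frac{1}{t}\ln(\cdot)$ and letting $t\to\infty$. Under the standing positivity hypothesis $\theta(i|\gamma)>0$ for all $i>p$ we have $q_\gamma(t)>0$ for every $t$, so the ratio is well defined. Dividing $\ln(q_\alpha(t)/q_\gamma(t))$ by $t$, the initial constant $\frac{1}{t}\ln(q_\alpha(0)/q_\gamma(0))$ vanishes, while the deterministic drift integrals, whose integrands are constants, contribute exactly $-\frac{1}{2}\sum_{i=0}^p (r(i|\alpha)-r(i|\gamma))^2$ from the diffusion part and $\sum_{i=p+1}^n \theta(i|\gamma)\big[1-\tfrac{\theta(i|\alpha)}{\theta(i|\gamma)}+\ln(\tfrac{\theta(i|\alpha)}{\theta(i|\gamma)})\big]$ from the jump part. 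It therefore suffices to show that the two martingale contributions, once divided by $t$, converge to $0$.

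This is exactly where Lemma~\ref{lem_changeofmeasure} is used. Under $\mathbb Q_\gamma$ each $X_i^\gamma$ is a standard Brownian motion and each $\hat N_i$ a Poisson process of constant intensity $\theta(i|\gamma)$, so $\hat M_i^\gamma(t)=\hat N_i(t)-\theta(i|\gamma)t$ is a compensated Poisson martingale. Since the integrands are constant, the diffusion martingale term equals $\frac{1}{t}(r(i|\alpha)-r(i|\gamma))\,X_i^\gamma(t)$ and tends to $0$ $\mathbb Q_\gamma$-almost surely by the strong law of large numbers for Brownian motion ($X_i^\gamma(t)/t\to0$). Likewise the jump martingale term equals $\frac{1}{t}\ln(\tfrac{\theta(i|\alpha)}{\theta(i|\gamma)})\,\hat M_i^\gamma(t)$ and tends to $0$ because $\hat N_i(t)/t\to\theta(i|\gamma)$ $\mathbb Q_\gamma$-almost surely by the strong law of large numbers for the Poisson process. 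Collecting all terms yields the first identity, $\mathbb Q_\gamma$-almost surely.

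To pass to the statement in terms of $\Upsilon$, I would use the identification of $\mathbb Q_\gamma$ with a conditional law. Recall from Theorem~\ref{thm:conv} that $q_\gamma(\infty)=\mathbb 1_{\Upsilon=\gamma}$, so that for any event $A$ one has $\mathbb Q_\gamma(A)=\mathbb E[\mathbb 1_A\,q_\gamma(\infty)]/q_\gamma(0)=\mathbb P(A\cap\{\Upsilon=\gamma\})/\mathbb P(\Upsilon=\gamma)$; that is, $\mathbb Q_\gamma=\mathbb P(\,\cdot\mid\Upsilon=\gamma)$. Hence any $\mathbb Q_\gamma$-almost sure event holds $\mathbb P$-almost surely on $\{\Upsilon=\gamma\}$. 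Since $\Upsilon$ takes values only among the finitely many $\gamma$ with $q_\gamma(0)\neq0$, and since $q_\Upsilon(t)=q_\gamma(t)$ on $\{\Upsilon=\gamma\}$, applying the first part on each such event and recombining over the partition $\Omega=\bigcup_\gamma\{\Upsilon=\gamma\}$ delivers the second identity $\mathbb P$-almost surely, with $\gamma$ replaced by $\Upsilon$ throughout.

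The main obstacle is not any isolated computation but the legitimacy of the two strong-law arguments: they are valid only because Lemma~\ref{lem_changeofmeasure} turns $X_i^\gamma$ and $\hat N_i$ into genuine Brownian and Poisson processes under $\mathbb Q_\gamma$, which fails under $\mathbb P$, where the compensator of $\hat N_i$ is the random $\int_0^t\langle\theta_i(s-)\rangle\,ds$. The subsequent transfer to the $\Upsilon$-statement rests entirely on the infinite-horizon extension of the change of measure established before the lemma, through the identity $q_\gamma(\infty)=\mathbb 1_{\Upsilon=\gamma}$; it is this that lets a family of $\mathbb Q_\gamma$-a.s. statements be glued into a single $\mathbb P$-a.s. statement about the random limit $\Upsilon$.
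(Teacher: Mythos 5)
Your proposal is correct and follows essentially the same route as the paper: take $\frac1t\ln$ of the Dol\'eans-type representation of $q_\alpha(t)/q_\gamma(t)$ written in terms of $X^\gamma_i$ and $\hat M^\gamma_i$, invoke Lemma~\ref{lem_changeofmeasure} together with the strong laws of large numbers for Brownian motion and the Poisson process under $\mathbb Q_\gamma$, and then transfer to the $\Upsilon$-statement via the identification $\mathbb Q_\gamma=\mathbb P(\,\cdot\mid\Upsilon=\gamma)$ and the decomposition $d\mathbb P=\sum_\gamma q_\gamma(0)\,d\mathbb P(\,\cdot\mid\Upsilon=\gamma)$. Your justification of that identification through $q_\gamma(\infty)=\mathbf 1_{\Upsilon=\gamma}$ is in fact slightly more explicit than the paper's.
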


\begin{proof}

From \eqref{eq_rapport}, we have
 \begin{eqnarray}
 \frac{1}{t}\ln\left(\frac{q_\alpha(t)}{q_\gamma(t)}\right)&=&\frac{1}{t}\ln\left(\frac{q_\alpha(0)}{q_\gamma(0)}\right)\nonumber\\&&+\sum_{i=0}^p(r(i\vert \alpha)-r(i\vert \gamma))\frac{X^\gamma_j(t)}{t}-\frac{1}{2}(r(i\vert \alpha)-r(i\vert \gamma))^2\nonumber\\&&+\sum_{i=p+1}^n\ln\left(\frac{\theta(i\vert\alpha)}{\theta(i\vert\gamma)}\right)\frac{\hat M^\gamma_i(t)}{t}
 +\left(\theta(i\vert\gamma)-\theta(i\vert\alpha)\right)+\ln\left(\frac{\theta(i\vert\alpha)}{\theta(i\vert\gamma)}\right)\theta(i\vert\gamma).\nonumber\\
 \end{eqnarray}
 Now from Lemma \ref{lem_changeofmeasure}, we know that under $\mathbb Q_\gamma$, the processes $(X^\gamma_j(t))$ are standard Brownian motions and $(\hat{N}_i(t))$ are a usual Poisson processes of intensities $\theta(i\vert\gamma)$. This way, using law of large number for Brownian motion and Poisson processes, we get
\begin{multline}
 \lim_{t\to\infty}\frac{1}{t}\ln\left(\frac{q_\alpha(t)}{q_\gamma(t)}\right)=-\frac{1}{2}\sum_{i=0}^p(r(i\vert \alpha)-r(i\vert \gamma))^2+\sum_{i=p+1}^n\theta(i\vert\gamma)\left[1-\frac{\theta(i\vert\alpha)}{\theta(i\vert\gamma)}+\ln\left(\frac{\theta(i\vert\alpha)}{\theta(i\vert\gamma)}\right)\right],
 \end{multline}
 $\mathbb Q_\gamma$ almost surely.
 
  Now let us remark that 
 $$d\mathbb P(\omega)=\sum_{\gamma\; s.t.\; q_\gamma(0)>0} q_\gamma(0)d\mathbb P(\omega\vert\Upsilon(\omega)=\gamma)$$
 which yields
 $$d\mathbb Q_\gamma(\omega)=d\mathbb P(\omega\vert\Upsilon(\omega)=\gamma).$$
 As a consequence 
 \begin{multline}
 \lim_{t\to\infty}\frac{1}{t}\ln\left(\frac{q_\alpha(t)}{q_\Upsilon(t)}\right)=-\frac{1}{2}\sum_{i=0}^p(r(i\vert \alpha)-r(i\vert \Upsilon))^2+\sum_{i=p+1}^n\theta(i\vert\Upsilon)\left[1-\frac{\theta(i\vert\alpha)}{\theta(i\vert\Upsilon)}+\ln\left(\frac{\theta(i\vert\alpha)}{\theta(i\vert\Upsilon)}\right)\right],
 \end{multline}
 $\mathbb P$ almost surely and the result follows.
 \end{proof}
 
 From the above Theorem since $q_\Upsilon(\infty)=1$, $\mathbb P$ almost surely, we get the following asymptotic expansion. For all $\alpha\in\mathcal P$ such that $q_\alpha(0)\neq0$
  \begin{multline}
 q_\alpha(t)=e^{-t\Big[\frac{1}{2}\sum_{i=0}^p(r(i\vert \alpha)-r(i\vert \Upsilon))^2-\sum_{i=p+1}^n\theta(i\vert\Upsilon)\left[1-\frac{\theta(i\vert\alpha)}{\theta(i\vert\Upsilon)}+\ln\left(\frac{\theta(i\vert\alpha)}{\theta(i\vert\Upsilon)}\right)\right]\Big]}\times(1+\circ(1)),\hfill
 \end{multline}
 $\mathbb P$ almost surely. From the inequality $\ln(x)\leq x-1$, we see that the rate
\begin{equation}\label{expo}
\frac{1}{2}\sum_{i=0}^p(r(i\vert \alpha)-r(i\vert \Upsilon))^2-\sum_{i=p+1}^n\theta(i\vert\Upsilon)\left[1-\frac{\theta(i\vert\alpha)}{\theta(i\vert\Upsilon)}+\ln\left(\frac{\theta(i\vert\alpha)}{\theta(i\vert\Upsilon)}\right)\right]\geq 0,\,\,\textrm{a.s}.\end{equation}
  More precisely each term of the sums is nonnegative. Now from Assumption (ND), the quantity \eqref{expo} is equal to zero if and only if $\alpha=\Upsilon$. This underlines the exponential rate convergence towards $\Upsilon$.

\begin{remark}
Let us assume the interaction of our measurement apparatus with the system involves only one hermitian operator $C_i=C$. In other words $n=0$. Let us also assume we can choose either to have a continuous process as our measurement record or a counting process. In other words, either the quantum stochastic master equation is
\begin{equation}\label{eq:only diff}
 d\rho(t)=L(\rho(t))dt + \Big(C\rho(t) + \rho(t)C^* - {\rm Tr}[(C+C^*)\rho(t)]\rho(t)\Big)dW_t
\end{equation}
or
\begin{equation}\label{eq:only jump}
 d\rho(t)=L(\rho(t-))dt + \left(\frac{C\rho(t-)C^*}{{\rm Tr}[C^*C\rho(t-)]} -\rho(t-)\right)[d\hat{N}(t) - {\rm Tr}[C^*C\rho(t-)]dt].
\end{equation}

In the diffusive case \eqref{eq:only diff}, the convergence rate \eqref{expo} is equal to $(c(\alpha)-c(\Upsilon))^2$. In the counting case \eqref{eq:only jump} it is equal to $-c(\Upsilon)^2[\ln(c(\alpha)^2/c(\Upsilon)^2) + 1 - c(\alpha)^2/c(\Upsilon)^2 ]$. A simple study shows that 
\[(c(\alpha)-c(\Upsilon))^2\leq-c(\Upsilon)^2[\ln(c(\alpha)^2/c(\Upsilon)^2) + 1 - c(\alpha)^2/c(\Upsilon)^2 ].
\]
So the choice of a counting process may lead to a higher convergence rate. But it comes at a price. Suppose $C$ has two different eigenvalues of equal norm: $c(\alpha)\neq c(\beta)$, $|c(\alpha)|=|c(\beta)|$. The non degeneracy assumption (ND) is not fulfilled for the jump equation \eqref{eq:only jump} whereas it is fulfilled for the diffusive equation \eqref{eq:only diff}.
\end{remark}

We now study the situation where it exists $i \in \{p+1,\ldots,n\}$ and $\alpha\in\mathcal P$, such that $\theta(i|\alpha)=0$. In this case we shall study the following stopping times
\begin{eqnarray}
T_i&=&\min \{t \in \mathbb R_+ \textrm{ s.t. } \hat{N}_i(t)>0\},i=p+1,\ldots,n\\
 T(\alpha)&=&\min\{T_i\textrm{ s.t. } \theta(i|\alpha)=0\},\alpha\in\mathcal P.
\end{eqnarray}
 Assume that $q_\alpha(0)\neq0$, we have $q_\alpha(T(\alpha))=0$ and $q_\alpha(t)>0$, for all $t< T(\alpha)$ as well as $q_\alpha(t)=0$, for all $t\geq T(\alpha)$. This way, if $T(\alpha)<\infty$, the process $(q_\alpha(t))$ converges to zero in finite time.

We have the following proposition which describes the distribution of $T(\alpha),\alpha\in\mathcal P$.

\begin{prop}
Let $\alpha\in\mathcal P$ such that there exists $i\in\{p+1,\ldots,n\}$ such that $\theta(i\vert\alpha)=0$. Then,
\[
\mathbb P(T(\alpha)\leq t|\Upsilon=\gamma)=1-e^{-\lambda(\alpha|\gamma)t},
\]
where $\displaystyle\lambda(\alpha|\gamma)=\sum_{i,\,\,s.t.\,\,\theta(i\vert\alpha)=0}\theta(i\vert\gamma)$.

 Finally, we have
\[
\mathbb P(T(\alpha)\leq t)=1-\sum_{\beta\in \mathcal P} q_\beta(0)e^{-\lambda(\alpha|\beta)t}.
\]

\end{prop}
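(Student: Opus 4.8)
The plan is to compute the law of $T(\alpha)$ conditionally on $\{\Upsilon=\gamma\}$, that is under the measure $\mathbb Q_\gamma$, and then to recover the unconditional law by the law of total probability. Recall from the proof of Theorem~\ref{thm:conv_expo} that for $\gamma\in\mathcal P$ with $q_\gamma(0)\neq0$ one has $d\mathbb Q_\gamma(\omega)=d\mathbb P(\omega\mid\Upsilon(\omega)=\gamma)$, so that conditioning on $\{\Upsilon=\gamma\}$ is exactly passing to $\mathbb Q_\gamma$. By Lemma~\ref{lem_changeofmeasure}, under $\mathbb Q_\gamma$ each counting process $\hat N_i$, $i=p+1,\ldots,n$, is a Poisson process with constant intensity $\theta(i\mid\gamma)$; in particular, when $\theta(i\mid\gamma)=0$ the process $\hat N_i$ does not jump $\mathbb Q_\gamma$-almost surely, which is consistent with the fact that a jump of such an $\hat N_i$ would force $q_\gamma$ to vanish and hence exclude $\{\Upsilon=\gamma\}$.

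First I would reduce $T(\alpha)$ to a single first-jump time. Set $\hat N^\alpha(t)=\sum_{i\,:\,\theta(i\mid\alpha)=0}\hat N_i(t)$, so that $T(\alpha)$ is precisely the first jump time of $\hat N^\alpha$. Under $\mathbb Q_\gamma$ the compensator of $\hat N^\alpha$ is the deterministic function $t\mapsto\lambda(\alpha\mid\gamma)\,t$, with $\lambda(\alpha\mid\gamma)=\sum_{i\,:\,\theta(i\mid\alpha)=0}\theta(i\mid\gamma)$. Moreover, because the Poisson point processes $N_i$ are independent under $\mathbb P$, the $\hat N_i$ have no common jumps $\mathbb P$-almost surely, hence also $\mathbb Q_\gamma$-almost surely since $\mathbb Q_\gamma\ll\mathbb P$; thus $\hat N^\alpha$ is a counting process with unit jumps and deterministic intensity. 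By Watanabe's characterization of the Poisson process, $\hat N^\alpha$ is therefore a Poisson process of rate $\lambda(\alpha\mid\gamma)$ under $\mathbb Q_\gamma$. Consequently its first jump time $T(\alpha)$ is exponentially distributed with parameter $\lambda(\alpha\mid\gamma)$, which gives
\[
\mathbb P(T(\alpha)\le t\mid\Upsilon=\gamma)=\mathbb Q_\gamma(T(\alpha)\le t)=1-e^{-\lambda(\alpha\mid\gamma)t}.
\]
When $\lambda(\alpha\mid\gamma)=0$, for instance when $\gamma=\alpha$, this reads $T(\alpha)=\infty$ almost surely, as it must since then $q_\alpha$ never vanishes.

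Finally I would remove the conditioning. Since $\mathbb P(\Upsilon=\beta)=q_\beta(0)$ and $\sum_{\beta\in\mathcal P}q_\beta(0)=1$, the law of total probability yields
\[
\mathbb P(T(\alpha)\le t)=\sum_{\beta\in\mathcal P}q_\beta(0)\,\mathbb P(T(\alpha)\le t\mid\Upsilon=\beta)=\sum_{\beta\in\mathcal P}q_\beta(0)\bigl(1-e^{-\lambda(\alpha\mid\beta)t}\bigr)=1-\sum_{\beta\in\mathcal P}q_\beta(0)\,e^{-\lambda(\alpha\mid\beta)t},
\]
where the terms with $q_\beta(0)=0$ (which satisfy $\mathbb P(\Upsilon=\beta)=0$) contribute nothing. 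The main obstacle I anticipate is the no-common-jumps step: the Radon--Nikodym density $q_\gamma(\infty)/q_\gamma(0)$ couples all the driving noises, so one cannot read off the joint law of the $\hat N_i$ under $\mathbb Q_\gamma$ directly from the marginal statements of Lemma~\ref{lem_changeofmeasure}; it is the absence of simultaneous jumps, preserved under the absolutely continuous change of measure, together with Watanabe's theorem that upgrades the individual Poisson marginals into the Poisson law of the superposition $\hat N^\alpha$ needed to identify the exponential law of $T(\alpha)$.
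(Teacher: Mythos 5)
Your proof is correct and follows the same overall architecture as the paper's: identify conditioning on $\{\Upsilon=\gamma\}$ with passing to $\mathbb Q_\gamma$, compute the law of $T(\alpha)$ under $\mathbb Q_\gamma$, then decondition via $\mathbb P=\sum_\beta q_\beta(0)\,\mathbb Q_\beta$. Where you differ is in the middle step. The paper keeps the processes $\hat N_i$ separate: it asserts that under $\mathbb Q_\gamma$ they are independent Poisson processes (hence the $T_i$ are independent exponentials) and concludes by the elementary fact that the minimum of independent exponentials is exponential with the sum of the rates. You instead form the superposition $\hat N^\alpha=\sum_{i:\theta(i\mid\alpha)=0}\hat N_i$, observe that its $\mathbb Q_\gamma$-compensator is $\lambda(\alpha\mid\gamma)t$ by additivity of compensators, check that it has unit jumps because absence of common jumps is a $\mathbb P$-a.s.\ property preserved under $\mathbb Q_\gamma\ll\mathbb P$, and invoke Watanabe's characterization to identify $\hat N^\alpha$ as a Poisson process. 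This buys you something real: the paper's one-line claim that independence of the $N_i$ under $\mathbb P$ transfers to independence of the $\hat N_i$ under $\mathbb Q_\gamma$ is not automatic, since the density $q_\gamma(\infty)/q_\gamma(0)$ couples all the driving noises (it can be justified, e.g.\ by the multivariate form of Watanabe's theorem, but the paper does not say so). Your route sidesteps the joint-independence question entirely, needing only the individual martingale properties from Lemma~\ref{lem_changeofmeasure} plus the no-common-jumps fact, at the mild cost of invoking Watanabe's theorem rather than a fact about exponential random variables. Both arguments are valid; yours is the more self-contained given what the lemma literally states.
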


\begin{proof}
Remember that, under $\mathbb Q_\gamma$, $(\hat{N}_i(t))$ are usual Poisson processes of intensities $\theta(i|\gamma)$. So, under $\mathbb Q_\gamma$, the law of $T_i$ is exponential with parameter equal to $\theta(i|\gamma)$, that is,
\[
\mathbb Q_\gamma(T_i\leq t)=1-e^{-\theta(i|\gamma)t}, i=p+1,\ldots,n.
\]
Under $\mathbb Q_\gamma$, $T(\alpha)$ is thus the minimum of a finite set of random variables obeying exponential laws. Since we have assumed that the Poisson point processes $N_i(.,.)$ are independent, under $\mathbb Q_\gamma$ the processes $(\hat{N}_i(t))$ are also independent as well as the stopping times $T_i$. From  the properties of exponential law, we have
\[
\mathbb Q_\gamma(T(\alpha)\leq t)=1-e^{-\lambda(\alpha|\gamma)t}.
\]
Since $\mathbb P=\sum_{\beta \in \mathcal P} q_\beta(0)\mathbb Q_\beta$, we have
\[
\mathbb P(T(\alpha)\leq t)=1-\sum_{\beta \in \mathcal P}q_\beta(0) e^{-\lambda(\alpha|\beta)t},
\]
which concludes the proof.
\end{proof}

Let us note that taking $t$ goes to infinity we get $$\displaystyle\mathbb P(T(\alpha)=\infty)=\sum_{\beta,\,\,\textrm{s.t.}\lambda(\alpha,\beta)=0}q_\beta(0)\geq q_\alpha(0)=\mathbb P(q_\alpha(\infty)=1).$$

In the next section, we address the problem of convergence when one does not have access to the process $(\rho(t))$ but only to the measurement records.

 \subsection{Stability}\label{sec:trial_state}
 
Usually, in experiments the initial state $\rho_0$ of the system $\mathcal H$ is unknown (this is sometimes that we want to estimate). This way we cannot have access to the quantum trajectory $(\rho(t))$. Nevertheless we have still access to the results given by the measurement apparatus. These results are directly connected to the quantum trajectory \eqref{eq:def_trajectory}. In terms of processes, the results of the measurement are described by output processes in the following way. The observed processes are given by 
 $$dy_i(t)=dW_i(t)+\textrm{Tr}[(C_i+C_i^*)\rho(t-)]dt, i=0,\ldots,p$$
 for the diffusive part of the evolution and by
 $$d\hat N_i(t),i=p+1,\ldots,n$$
 for the counting processes. The quantities $y_i(t)$ and $\hat N_i(t)$ are the measurements recorded by the apparatus. In an homodyne or heterodyne detection scheme, $y_i(t)$ would represent the detected photo current integrated up to time $t$ whereas, in a direct photodetection scheme, $\hat N_i(t)$ would be the number of photons detected up to time $t$\cite{carm0,W1}. The quantum trajectory can be expressed as follows
  \begin{eqnarray}\label{eq:def_tttrajectory2}
   \rho(t)&=& L(\rho(t-))dt + \sum_{i=0}^p H_i(\rho(t-)) (dy_i(t)-\textrm{Tr}[(C_i+C_i^*)\rho(t-)]dt)\nonumber\\&& + \sum_{i=p+1}^n \left(\frac{J_i(\rho({t-}))}{v_i(\rho({t-}))}-\rho({t-})\right)(d\hat{N}_i(t)-v_i(\rho(t-))dt),\\
   \rho(0)&=&\rho_0.
\end{eqnarray}
 In the case where we do not know the initial state $\rho_0$, we can estimate the quantum trajectory $(\rho(t))$ by using an estimate quantum trajectory $(\tilde\rho(t))$ satisfying the following stochastic differential equation
 \begin{eqnarray}\label{eq:def_estimatetrajectory2}
   \tilde\rho(t)&=& L(\tilde\rho(t-))dt + \sum_{i=0}^p H_i(\tilde\rho(t-)) (dy_i(t)-\textrm{Tr}[(C_i+C_i^*)\tilde\rho(t-)]dt)\nonumber\\&& + \sum_{i=p+1}^n \left(\frac{J_i(\tilde\rho({t-}))}{v_i(\tilde\rho({t-}))}-\tilde\rho({t-})\right)(d\hat{N}_i(t)-v_i(\tilde\rho(t-))dt),\\
   \tilde\rho(0)&=&\tilde\rho_0,
\end{eqnarray}
where $\tilde\rho_0$ is an arbitrary state. Let us stress that $(y_i(t)),i=0,\ldots,p$ and $(\hat{N}_i(t)),i=p+1,\ldots,n$ are the output processes attached to the true quantum trajectory $(\rho(t))$. This way, if $\tilde \rho_0=\rho_0$ we get $(\tilde\rho(t))=(\rho(t))$. In particular the estimate quantum trajectory $( \tilde\rho(t))$ is governed by the measurement record as if it was the true quantum trajectory. This can allow to simulate an estimation of the true quantum trajectory. Such an estimate is often called a \textit{quantum filter}.  An important question is to know if the estimate become closer and closer to the true quantum trajectory when $t$ goes to infinity. In particular does the distance between the estimate and the true quantum trajectory converges to zero? Such a property is related to the so called  \textit{stability of quantum filter}. For general schemes of indirect measurement, partial results in this direction has been 
developed in \cite{Rou3} (it is shown that the fidelity 
between $(\rho(t))$ and $(\tilde\rho(t))$ increases at least in average when $t$ increases). 

Here, in the context of QND we show that the estimate quantum trajectory converges to the same state than the one of the true quantum trajectory. We shall show namely that $(\tilde\rho(t))$ converges to $\vert\Upsilon\rangle\langle\Upsilon\vert$ when $t$ goes to infinity. This is achieved by a direct analysis of the quantities
$$\tilde q_\alpha(t)=\textrm{Tr}[\tilde\rho(t)\vert\alpha\rangle\langle\alpha\vert],$$
for all $\alpha\in\mathcal P$ and for all $t\geq0$. The processes $(\tilde q_\alpha(t))$ repesent actually the estimate of the true $(q_\alpha(t))$.

From Equation \eqref{eq:def_estimatetrajectory2}, we can see that the processes $(\tilde q_\alpha(t))$ are solutions of
 \begin{multline}\label{eq:tilde q_alpha(t)}
     d\tilde q_\alpha(t)= \tilde q_\alpha(t-)\Bigg[\sum_{i=0}^p\Big(r(i\vert \alpha) - \langle \tilde r_i(t-)\rangle\Big) \Big(dW_i(t)+\big(\langle r_i(t-)\rangle-\langle\tilde r_i(t-)\rangle\big)\Big)
			    \hfill \\\hfill+ \sum_{i=p+1}^n \left(\frac{\theta(i\vert \alpha)}{\langle\tilde\theta_i(t-)\rangle} -1 \right) \Big(d\hat N_i(t)-\langle\tilde\theta_i(t-)\rangle dt\Big)\Bigg].	    
   \end{multline}

 Let us treat the case where $\theta(i\vert\alpha)\neq0,$ for all $\alpha\in\mathcal P$ and for all $i=p+1,\ldots,n$. Eq. \eqref{eq:tilde q_alpha(t)} are still Dade-Doleans exponential and the solution of \eqref{eq:tilde q_alpha(t)} are given by

 \begin{multline}\label{eq_dade_trial}
\tilde q_t(\alpha)=\tilde q_0(\alpha)\times\exp\Bigg[\sum_{i=0}^p\Bigg(\int_0^t\Big(r(i\vert \alpha)-\langle \tilde r_i(s-)\rangle\Big)\Big(dW_i(s)+\big(\langle r_i(s-)\rangle-\langle\tilde r_i(s-)\rangle\big)ds\Big)\hfill\\\hfill-\frac 12\int_0^t\big(r(i\vert \alpha)-\langle \tilde r_i(s-)\rangle\big)^2ds\Bigg)\hfill\\\hfill
 +\sum_{i=p+1}^n\Bigg(\int_0^t\ln\left(\frac{\theta(i\vert\alpha)}{\langle\tilde\theta_i(s-)\rangle}\right)d\hat N_i(s)-\int_0^t\left(\theta(i\vert\alpha)-\langle\tilde\theta_i(s-)\rangle\right))ds\Bigg)\Bigg],
\end{multline} 
for all $\alpha\in\mathcal P$ and for all $t\geq0$.  

Let us stress that, in general, these processes are no more $(\mathcal F_t)$ martingales. We cannot then conclude to their convergence by using martingale convergence results. Nevertheless the quantity $\tilde q_t(\alpha)/\tilde q_t(\gamma)$ takes exactly the same form than the one of the true quantum trajectory. More precisely, assume that $\tilde q_\gamma(0)\neq 0$, after computations we get that
 \begin{multline}\label{eq_rapport_trial}
 \frac{\tilde q_\alpha(t)}{\tilde q_\gamma(t)}=
\frac{\tilde q_\alpha(0)}{\tilde q_\gamma(0)}\times\exp\Bigg[\sum_{i=0}^p\Bigg(\int_0^t\big(r(i\vert \alpha)-r(i\vert \gamma)\big)dX^\gamma_i(s)-\frac 12\int_0^t(r(i\vert \alpha)-r(i\vert \gamma))^2ds\Bigg)\hfill\\\hfill
 +\sum_{i=p+1}^n\Bigg(\int_0^t\ln\left(\frac{\theta(i\vert\alpha)}{\theta(i\vert\gamma)}\right)d\hat M^\gamma_i(s)-\int_0^t\left(\theta(i\vert\alpha)-\theta(i\vert\gamma)\right)-\ln\left(\frac{\theta(i\vert\alpha)}{\theta(i\vert\gamma)}\right)\theta(i\vert\gamma))ds\Bigg)\Bigg],
 \end{multline} 
for all $\alpha,\gamma\in\mathcal P$, where the processes $(X^\gamma_i(t)),i=0,\ldots,p$ and $(M^\gamma_i(t)),i=p+1,\ldots,n$ have been defined in \eqref{eq:def_X_M}. Now we would like to consider the limit of this quantity under $\mathbb Q_\gamma$. To this end, we first need to consider a $\gamma$ such that $q_\gamma(0)\neq 0$. In this case, if $\tilde q_\alpha(0)\neq0$, we get 
 \begin{multline}
 \lim_{t\to\infty}\frac{1}{t}\ln\left(\frac{\tilde q_\alpha(t)}{\tilde q_\gamma(t)}\right)=-\frac{1}{2}\sum_{i=0}^p(r(i\vert \alpha)-r(i\vert \gamma))^2+\sum_{i=p+1}^n\theta(i\vert\gamma)\left[1-\frac{\theta(i\vert\alpha)}{\theta(i\vert\gamma)}+\ln\left(\frac{\theta(i\vert\alpha)}{\theta(i\vert\gamma)}\right)\right],
 \end{multline}
 $\mathbb Q_\gamma$ almost surely. As a consequence, we get again
 \begin{multline}\label{zaz}
 \lim_{t\to\infty}\frac{1}{t}\ln\left(\frac{\tilde q_\alpha(t)}{\tilde q_\Upsilon(t)}\right)=-\frac{1}{2}\sum_{i=0}^p(r(i\vert \alpha)-r(i\vert \Upsilon))^2+\sum_{i=p+1}^n\theta(i\vert\Upsilon)\left[1-\frac{\theta(i\vert\alpha)}{\theta(i\vert\Upsilon)}+\ln\left(\frac{\theta(i\vert\alpha)}{\theta(i\vert\Upsilon)}\right)\right],
 \end{multline}
$\mathbb P$ almost surely. This yields that $\mathbb P$ almost surely
 \begin{multline}\label{azaz}
\tilde q_\alpha(t)=\tilde q_\Upsilon(t)e^{-t\Big[\frac{1}{2}\sum_{i=0}^p(r(i\vert \alpha)-r(i\vert \Upsilon))^2-\sum_{i=p+1}^n\theta(i\vert\Upsilon)\left[1-\frac{\theta(i\vert\alpha)}{\theta(i\vert\Upsilon)}+\ln\left(\frac{\theta(i\vert\alpha)}{\theta(i\vert\Upsilon)}\right)\right]+\circ(1)\Big]}.\hfill
 \end{multline}
Now we consider $\Omega''$ with $\mathbb P(\Omega'')=1$ and such that \eqref{azaz} is fulfilled for all $\omega\in\Omega''$. Let $\omega\in\Omega''$ be fixed. For any $\alpha$ such that $\alpha\neq\Upsilon(\omega)$, under Assumption (ND) there exists $i$ such that either $r(i\vert\alpha)\neq r(i\vert \Upsilon(\omega))$ or $\theta(i\vert\alpha)\neq\theta(i\vert\Upsilon(\omega))$. Hence, since $0\leq \tilde q_\Upsilon(t)\leq 1$, we can conclude that
$$\lim_{t\to\infty}\tilde q_\alpha(t)(\omega)=0.$$
Now, since $\sum \tilde q_\alpha(t)=1$ still holds, we deduce that 
$$\lim_{t\to\infty}\tilde q_{\Upsilon(\omega)}(t)(\omega)=1.$$

Note that the result \eqref{zaz} requires only that $\tilde q_\alpha(0)\neq0$ as soon as $q_\alpha(0)\neq 0$. Then, we can express the following result.
\begin{prop} Assume that $\theta(i\vert\alpha)>0$, for all $\alpha\in\mathcal P$ and for all $i=p+1,\ldots,n.$
Assume Assumption (ND) is satisfied. Assume that $\tilde{q}_\alpha(0)\neq 0$ for any $\alpha \in \mathcal P$ such that $q_\alpha(0)\neq0$.

 Let $(\tilde q_\alpha(t))$ be the stochastic processes defined by \eqref{eq_dade_trial} with $\tilde{q}_\alpha(0)\neq0$. We have 
\begin{eqnarray}\lim_{t\to\infty}\tilde q_\Upsilon(t)&=&1,\nonumber\\
\lim_{t\to\infty}\tilde q_\alpha(t)\mathbf 1_{\Upsilon\neq\alpha}&=&0,
\end{eqnarray}
 $\mathbb P$ almost surely. Moreover, the convergence is exponentially fast, that is
  \begin{multline}
 \lim_{t\to\infty}\frac{1}{t}\ln\left(\frac{\tilde q_\alpha(t)}{\tilde q_\Upsilon(t)}\right)=-\frac{1}{2}\sum_{i=0}^p(r(i\vert \alpha)-r(i\vert \Upsilon))^2+\sum_{i=p+1}^n\theta(i\vert\Upsilon)\left[1-\frac{\theta(i\vert\alpha)}{\theta(i\vert\Upsilon)}+\ln\left(\frac{\theta(i\vert\alpha)}{\theta(i\vert\Upsilon)}\right)\right],
 \end{multline}
  $\mathbb P$ almost surely.
  
 Finally
 $$\lim_{t\to\infty}\tilde{\rho}(t)=\vert\Upsilon\rangle\langle\Upsilon\vert,$$
 $\mathbb P$ almost surely.
\end{prop}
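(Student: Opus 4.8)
The plan is to exploit the fact that, although the estimate processes $(\tilde q_\alpha(t))$ are no longer $(\mathcal F_t)$ martingales (so the convergence machinery of Theorem \ref{thm:conv} is unavailable), their pairwise ratios are insensitive to the wrong initial condition $\tilde\rho_0$: in the ratio $\tilde q_\alpha(t)/\tilde q_\gamma(t)$ the estimate-dependent drifts $\langle\tilde r_i\rangle$ and $\langle\tilde\theta_i\rangle$ cancel, and one is left with exactly the same expression \eqref{eq_rapport_trial} as for the true trajectory, written through the processes $X^\gamma_i$ and $\hat M^\gamma_i$ of \eqref{eq:def_X_M} built from the \emph{true} filtering quantities. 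This cancellation is the crux of the argument and is what makes the filter stable. It requires $\tilde q_\gamma(0)\neq0$ for the ratio to be well defined; since $\Upsilon$ takes values only among the $\alpha$ with $q_\alpha(0)>0$, the hypothesis that $\tilde q_\alpha(0)\neq0$ whenever $q_\alpha(0)\neq0$ guarantees $\tilde q_\Upsilon(0)\neq0$.

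First I would fix $\gamma\in\mathcal P$ with $q_\gamma(0)\neq0$ and work under $\mathbb Q_\gamma$. By Lemma \ref{lem_changeofmeasure} the $(X^\gamma_i(t))$ are standard $\mathbb Q_\gamma$ Brownian motions and the $(\hat N_i(t))$ are Poisson processes of deterministic intensities $\theta(i\vert\gamma)$, so applying the strong law of large numbers (giving $X^\gamma_i(t)/t\to0$ and $\hat M^\gamma_i(t)/t\to0$) to \eqref{eq_rapport_trial} yields the $\mathbb Q_\gamma$-almost-sure limit of $\frac1t\ln(\tilde q_\alpha(t)/\tilde q_\gamma(t))$, which is identical to the rate obtained for the true trajectory in Theorem \ref{thm:conv_expo}. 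Transferring back to $\mathbb P$ via the identity $d\mathbb Q_\gamma(\omega)=d\mathbb P(\omega\vert\Upsilon(\omega)=\gamma)$ established in that proof, and weighting by the $q_\gamma(0)$, I obtain \eqref{zaz}: the stated exponential rate with $\gamma$ replaced by the random pointer $\Upsilon$, $\mathbb P$ almost surely.

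Next I would extract the componentwise convergence from the rate. By the inequality $\ln x\le x-1$ each summand of the limiting rate is nonpositive, and Assumption (ND) forces at least one summand to be strictly negative whenever $\alpha\neq\Upsilon$; hence $\tilde q_\alpha(t)/\tilde q_\Upsilon(t)\to0$ exponentially on the full-measure set $\Omega''$ on which \eqref{azaz} holds. Since $0\le\tilde q_\Upsilon(t)\le1$, this gives $\tilde q_\alpha(t)\mathbf 1_{\Upsilon\neq\alpha}\to0$, and the normalisation $\sum_{\alpha\in\mathcal P}\tilde q_\alpha(t)=1$ (valid because $\tilde\rho(t)$ is a state) then forces $\tilde q_\Upsilon(t)\to1$, $\mathbb P$ almost surely.

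Finally, for the density-matrix statement I would reuse the compactness argument ending the proof of Theorem \ref{thm:conv}: on $\Omega''$ let $\mu$ be any limit point of the bounded family $(\tilde\rho(t)(\omega))$; from $\mathrm{Tr}[\mu\vert\Upsilon(\omega)\rangle\langle\Upsilon(\omega)\vert]=\lim_t\tilde q_\Upsilon(t)(\omega)=1$ and the fact that $\mu$ is a state, one gets $\mu=\vert\Upsilon(\omega)\rangle\langle\Upsilon(\omega)\vert$, so every subsequential limit coincides and $\tilde\rho(t)\to\vert\Upsilon\rangle\langle\Upsilon\vert$, $\mathbb P$ almost surely. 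The main obstacle is precisely the loss of the martingale structure that drove Theorem \ref{thm:conv}; it is circumvented entirely by passing to ratios, which reduce the problem to the already-established law-of-large-numbers computation under $\mathbb Q_\gamma$.
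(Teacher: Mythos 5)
Your proposal is correct and follows essentially the same route as the paper: the cancellation of the estimate-dependent drifts in the ratio $\tilde q_\alpha(t)/\tilde q_\gamma(t)$, the law of large numbers under $\mathbb Q_\gamma$ via Lemma \ref{lem_changeofmeasure}, the transfer back to $\mathbb P$ through $d\mathbb Q_\gamma=d\mathbb P(\cdot\vert\Upsilon=\gamma)$, and the extraction of $\tilde q_\Upsilon(t)\to1$ from the strictly negative rate under (ND) plus normalisation. Your closing compactness argument for $\tilde\rho(t)\to\vert\Upsilon\rangle\langle\Upsilon\vert$ is the same one the paper borrows from Theorem \ref{thm:conv}, so there is nothing to add.
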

In particular, it appears clearly that $$\lim_{t\to\infty}\tilde\rho(t)=\lim_{t\to\infty}\rho(t)=\vert\Upsilon\rangle\langle\Upsilon\vert$$ which shows the stability of the estimation and the convergence rate is the same. 

Remark that if $\tilde q_\alpha(0)=0$, then the conditions of our proposition impose $q_\alpha(0)=0$. Otherwise the estimation would be irrelevant. In the case $q_\alpha(0)=0$, this does not necessarily impose that $\tilde q_\alpha(0)=0$. In the case $\tilde q_\alpha(0)\neq0$, we still have that $\lim_{t\to\infty}\tilde q_\alpha(t)=0,$ almost surely (exponentially fast). In particular, in experiment a safe choice is to choose $\tilde\rho(0)$ such that $\tilde q_\alpha(0)\neq0,$ for all $\alpha\in\mathcal P$.
\bigskip

In the situation where it exists $i \in \{p+1,\ldots,n\}$, such that $\theta(i|\alpha)=0$, it is interesting to note that when a $q_\alpha(t)$ vanishes, the same happens for the estimate $\tilde q_\alpha(t)$. This follows from the fact that we have access to the jumping times of the processes $\hat N_i(.)$ through the measurement records. 

\subsection*{Acknowledgments}
C.~P.~ acknowledges financial support from the ANR project HAM-MARK,
N${}^\circ$ ANR-09-BLAN-0098-01.  

\noindent T.~B.~ thanks Denis Bernard for helpful discussions and acknowledges financial support from ANR contracts ANR-2010-BLANC-0414.01 and ANR-2010-BLANC-0414.02.

\end{document}